\providecommand{\U}[1]{\protect\rule{.1in}{.1in}}
\newtheorem{theorem}{Theorem}
\newtheorem{lemma}{Lemma}
\newtheorem{proposition}{Proposition}
\newtheorem{remark}{Remark}
\begin{document}
%
\title
{Performance analysis and optimal selection\\of large mean-variance portfolios under estimation risk}
\author{\IEEEauthorblockN{Francisco Rubio*\footnote{*Corresponding author}%
, Xavier Mestre and Daniel P. Palomar}
}
\maketitle
%

\begin{abstract}%
We study the consistency of sample mean-variance portfolios of arbitrarily
high dimension that are based on Bayesian or shrinkage estimation of the input
parameters as well as weighted sampling. In an asymptotic setting where the
number of assets remains comparable in magnitude to the sample size, we
provide a characterization of the estimation risk by providing deterministic
equivalents of the portfolio out-of-sample performance in terms of the
underlying investment scenario. The previous estimates represent a means of
quantifying the amount of risk underestimation and return overestimation of
improved portfolio constructions beyond standard ones. Well-known for the
latter, if not corrected, these deviations lead to inaccurate and overly
optimistic Sharpe-based investment decisions. Our results are based on recent
contributions in the field of random matrix theory. Along with the asymptotic
analysis, the analytical framework allows us to find bias corrections
improving on the achieved out-of-sample performance of typical portfolio
constructions. Some numerical simulations validate our theoretical findings.%
\end{abstract}%

\vspace{1cm}%
\begin{keywords}
mean-variance portfolio optimization, asymptotic performance analysis, consistent estimation, stochastic convergence, random matrix theory
\end{keywords}%
\vspace{3cm}%

\section{Introduction%
\label{secIntro}%
}

\subsection{Background and research motivations}

The foundations of modern portfolio theory were laid by Markowitz's
ground-breaking article \cite{M52}, where the idea of diversifying a portfolio
by spreading bets across a universe of risky financial assets was refined and
generalized by the more sophisticated one of combining the assets so as to
optimize the risk-return tradeoff. In practice, Markowitz's mean-variance
optimization framework for solving the canonical wealth allocation problem
relies on the statistical estimation of the unknown expected values and
covariance matrix of the asset returns from sample market observations.

In general, the uncertainty inherently associated with imperfect moments
estimates represents a major drawback in the application of the classical
Markowitz framework. Indeed, the optimal mean-variance solution has been
empirically observed to be significantly sensitive to deviations from the true
input parameters. In addition, and aside from computational complexity issues,
the estimation of the parameters is involved, mainly due to the instability of
the parameter estimates through time. Generally, estimates of the covariance
matrix are more stable than those of the mean returns, and so many studies
disregard the estimation of the latter and concentrate on improving the sample
performance of the so-called global minimum variance portfolio (GMVP); see
arguments in \cite{JM03}.

In the financial literature, the previous source of portfolio performance
degradation is referred to as estimation risk. Especially when the number of
securities is comparable to the number of observations, estimation errors may
in fact prevent the mean-variance optimization framework from being of any
practical use. In fact, for severe levels of estimation risk, the naive
portfolio allocation rule namely obtained by equally weighting the assets
without incorporating any knowledge about their mean and covariance turns out
to represent a firm candidate choice \cite{DGU09}. The consistency and
distributional properties of sample optimal mean-variance portfolios and their
Sharpe ratio performance has been analyzed and characterized for finite
samples and asymptotically (see, most recently, \cite{OS06,KS08,SS10sharpe},
and also the list of references therein for earlier contributions).

Commencing with particularly high activity and contribution levels in the
80's, there exists a vast literature on portfolio selection methods accounting
for estimation risk by explicitly dealing with the lack of robustness and
stability of the sample optimal mean-variance solution, which we do not intend
to exhaustively review here; we refer the reader to \cite{M05B,FKP07B} for a
thorough treatment of the subject. Some remarks on the two main lines of
approach are in order. One class of methods based on convex analysis and
nonlinear optimization techniques focuses on formulations of the allocation
problem where robustness to estimation errors is achieved by means of the
explicit modeling of parameter uncertainty regions. Conceptually, assuming
worst-case bounds on the input parameters may not be effective in practice
since no information is available about the distribution of the estimated
parameters within the uncertainty boundaries.

On the other hand, instances of a second family of methods of statistical or
probabilistic nature are approaches based on Bayesian and Steinian shrinkage
estimation seeking efficiency by weighting a sensible prior belief and the
classical sample estimator in inverse proportion to their dispersion (see,
e.g., \cite{Jo82,LW03}). As a matter of fact, this class of techniques
provides a rather general framework for understanding different forms of
portfolio corrections and performance improvements tackling estimation risk.
Indeed, explicit links have been found between the latter and the robust
optimization solutions introduced above, which turn out to be possibly
interpreted from a shrinkage estimation perspective \cite{S07robust}.
Furthermore, constraining%
\footnote
{Usual constraints on the allocation weights that are typically considered in the portfolio construction process are those modelling the self-financing characteristic of the investment rule, as well as budget and short-selling restrictions.}
the portfolio weights has been additionally noted to be equivalent to adding
some structure to the covariance estimation problem as obtained through
Bayesian or shrinkage-based procedures \cite{JM03}. Arguably, the latter
constitutes an effective way helping to avoid overfitting the sample data and
to improve the stability of the realized portfolio solution out of sample and
over time (see also \cite{DGNU09}, where the authors investigate the effects
of norm-constrains in the solution of the weight vector). The application of
Bayesian and shrinkage approaches is not limited to the moment estimation
problem alone, but can indeed be extended to incorporate any prior belief
directly on the portfolio weights. Linear shrinkage solutions optimally
combining different portfolio allocation rules, such as the GMVP, the
portfolio with equal weights and the tangency portfolio (cf. Section
\ref{secDataFor}) have been reported in \cite{KZ07,GUW07,FMW10,TZ11}.

Alternative approaches have been based on resampling techniques
\cite{M93B,BLW09b}, as well as stochastic programming and also robust
estimation, where the emphasis is on robustifying estimators that are
efficient under the assumption of Gaussian asset-returns, and which are
usually highly sensitive to deviations from the distributional assumption
(see, e.g., \cite{DN09} and references therein). Finally, a line of
contributions from statistical physics initiated by \cite{PGRAS99,LCBPa99}
have been reporting on a methodology based on random matrix theory that
consists of preserving the stability over time of the covariance matrix
estimator by filtering noisy eigenvalues conveying no valuable information.
The cleaning mechanism relies on the empirical fact that relevant information
is structurally captured by some few eigenvalues, while the rest can be
ascribed to noise and measurement errors and resemble the spectrum of a white
covariance matrix (see also \cite{BP11}).

In this paper, we are interested in the class of structured portfolio
estimators based on the combination of Bayesian or James-Stein shrinkage and
sample weighting. Motivated by the widespread application of this class of
statistical methods in the practice of portfolio and risk management, our
focus is on the performance of portfolio constructions as a function of the
set of weights as well as the shrinkage targets and intensity coefficients
parameterizing the improved moment forecasts. The extension of the statistical
performance analysis of sample optimal portfolios with standard moment
estimates to the case of improved shrinkage estimators is not straightforward.
We concentrate on the consistency analysis by considering a limiting regime
that is defined by both the number of samples and the portfolio dimension
going to infinity at the same rate. Such an asymptotic setting will prove to
be more convenient to characterize realistic, finite-dimensional practical
conditions, where sample-size and number of assets are comparable in
magnitude. In particular, we resort to some recent results from the theory of
the spectral analysis of large random matrices, which as in \cite{BLW09b} and
contrary to the random matrix theoretical contributions from statistical
physics cited above, are based on Stieltjes transform methods and stochastic
convergence theory.

Before outlining the contributions and structure of the work, we draw some
connections between the subject of the paper and classical methods in the
statistical signal processing literature. As a matter of fact, (\ref{DGP})
encompasses a broad range of system configurations described by the general
vector channel model. In fact, as for the mean-variance portfolio optimization
problem, usual linear filtering schemes solving typical signal waveform
estimation and detection problems in sensor array processing and wireless
communications are based on the estimation of the unknown observation
covariance matrix as well as possibly a vector of cross-correlations with a
pilot training sequence. Prominent examples are the Capon or minimum variance
spatial filter as well as the minimum mean-square error beamformer and
detector \cite{V98B,VT02B}, and also adaptive filtering\footnote{In
particular, typical formulations of this problem based on (weighted)
least-squares regression are intimately related to the passive investment
strategy of index tracking (see, e.g., \cite[Chapter 4]{P07B}).} applications
\cite{XH05}, in all of which both Bayesian and regularization (shrinkage)
methods are widely applied. Indeed, robust methods are similarly well-known
and extensively used in signal processing applications (see examples in, e.g.,
\cite{LS05B,PE10B}). In particular, norm-constrains have been extensively
investigated in the sensor array signal processing literature (see, e.g.,
\cite{LSW04}). Finally, analyses of weighted sample estimators of covariance
matrices can be found in \cite{Z03,BBT08} and applications of the bootstrap in
\cite{ZB98}.

\subsection{Contributions and structure of the work}

The main contributions of the paper are as follows. We first characterize the
consistency of sample mean-variance portfolios based on the aforementioned
improved moment estimators by providing asymptotic deterministic equivalents
of the achieved out-of-sample performance in the more meaningful double-limit
regime introduced above. Our analytical framework allows us to quantify and
better understand the impact of estimation errors on the out-of-the-sample
performance of optimal portfolios. Specifically, we provide a precise
quantitative description of the amount of risk underestimation and return
overestimation of portfolio constructions based on improved estimators, in a
way depending on the ratio of the portfolio dimension to sample-size as well
as the underlying investment scenario. This phenomena, which render overly
optimistic any investment assessment and decision based on estimated Sharpe
ratios, has already been observed in the financial literature for standard
portfolio implementations.

Furthermore, we propose a class of mean-variance portfolio estimators defined
in terms of a set of weights and shrinkage parameters calibrated so as to
optimize the achieved out-of-sample performance. In essence, an optimal
parameterization is obtained by effectively correcting the analytically
derived asymptotic deviations of the performance of sample portfolios.

The structure of the work is as follows. After the brief literature account
and introductory research motivations in this section, Section
\ref{secDataFor} introduces the modeling details and the moment forecasting
schemes considered in this paper. The problem of evaluating the out-of-sample
performance of large portfolios is also explained. In Section \ref{secADE}, we
provide a characterization of the performance of improved estimators based on
sample weighting and James-Stein shrinkage. Observed deviations from optimal
performance are corrected in Section \ref{secGCE}, where we propose a class of
improved portfolios for high-dimensional settings. Section \ref{secSim}
presents some simulation work validating our theoretical findings and Section
\ref{secCon} concludes the contribution by summarizing the paper. Technical
results and proofs are relegated to the appendices.

\section{Data model and problem formulation%
\label{secDataFor}%
}

Consider the time series with the logarithmic differences of the prices of $M$
financial assets at the edges of an investment period with time-horizon $t$.
Generally enough, we can define the data generating process of the previous
compound or log returns by the following vector stochastic
process\footnote{\textbf{Notation}: All vectors are defined as column vectors
and designated with bold lower case; all matrices are given in bold upper
case; for both vectors and matrices a subscript will be added to emphasize
dependence on dimension, though it will be occasionally dropped for the sake
of clarity of presentation; $\left[  \cdot\right]  _{j}$ will be used for the
$j$th entry of a vector; $\left(  \cdot\right)  ^{T}$ denotes transpose;
$\mathbf{I}_{M}$ denotes the $M\times M$ identity matrix; $\mathbf{1}_{M}$
denotes an $M$ dimensional vector with all entries equal to one;
$\operatorname*{tr}\left[  \cdot\right]  $ denotes the matrix trace operator;
$\mathbb{R}$ and $\mathbb{C}$ denote the real and complex fields of dimension
specified by a superscript; $\operatorname{Im}\left\{  z\right\}  $ denotes
imaginary part of the complex argument; $\mathbb{R}^{\mathbb{+}}=\left\{
z\in\mathbb{C}:\operatorname{Im}\left\{  z\right\}  >0\right\}  $;
$\mathbb{C}^{\mathbb{+}}=\left\{  z\in\mathbb{C}:\operatorname{Im}\left\{
z\right\}  >0\right\}  $; $\mathbb{E}\left[  \cdot\right]  $ denotes
expectation; given two quantites $a,b$, $a\asymp b$ will denote both
quantities are asymptotic equivalents, i.e., $\left\vert a-b\right\vert
\overset{a.s.}{\rightarrow}0$, with $a.s.$ denoting almost sure convergence;
$K,K_{p}$ denote constant values not depending on any relevant quantity, apart
from the latter on a parameter $p$; $\left\vert \cdot\right\vert $ denotes
absolute value and $\left\Vert \cdot\right\Vert $ denotes the Euclidean norm
for vectors and the induced norm for matrices (i.e., spectral or strong norm),
whereas $\left\Vert \cdot\right\Vert _{F}$ denotes Frobenius norm, i.e., for a
matrix $\mathbf{A}\in\mathbb{C}^{M\times M}$ with eigenvalues denoted by
$\lambda_{m}\left(  \mathbf{A}\right)  $, $m=1,\ldots,M$, such that
$\lambda_{M}\left(  \mathbf{A}\right)  \leq\lambda_{M-1}\left(  \mathbf{A}%
\right)  \leq\ldots\leq\lambda_{1}\left(  \mathbf{A}\right)  $, and spectral
radius $\rho\left(  \mathbf{A}\right)  =\max_{1\leq m\leq M}\left(  \left\vert
\lambda_{m}\right\vert \right)  $, $\left\Vert \mathbf{A}\right\Vert =\left(
\rho\left(  \mathbf{A}^{H}\mathbf{A}\right)  \right)  ^{1/2}$, $\left\Vert
\mathbf{A}\right\Vert _{F}=\left(  \operatorname*{Tr}\left[  \mathbf{A}%
^{H}\mathbf{A}\right]  \right)  ^{1/2}$, $\left\Vert \mathbf{A}\right\Vert
_{\operatorname*{tr}}=\operatorname*{Tr}\left[  \left(  \mathbf{A}%
^{H}\mathbf{A}\right)  ^{1/2}\right]  $.}:%
\begin{equation}
\mathbf{y}_{t}=\boldsymbol{\mu}_{t}+\boldsymbol{\varepsilon}_{t}\text{,\quad
}\boldsymbol{\varepsilon}_{t}=\mathbf{\Sigma}_{t}^{1/2}\mathbf{x}_{t}\text{,}
\label{DGP}%
\end{equation}
where $\boldsymbol{\mu}_{t}$ and $\mathbf{\Sigma}_{t}$ are the expected value
and covariance matrix of the asset returns over the investment period, and
$\mathbf{x}_{t}$ is a random vector with independent and identically
distributed (i.i.d.) entries having mean zero and variance one. We are
interested in the problem of optimal single-period (static) mean-variance
portfolio selection, which can be mathematically formulated as the following
quadratic optimization problem with linear constraints:%
\begin{equation}%
\begin{array}
[c]{rl}%
\underset{\mathbf{w}_{t}}{\min} & \mathbf{w}_{t}^{T}\mathbf{\Sigma}%
_{t}\mathbf{w}_{t}\\
\mathsf{s.t.} & \mathbf{w}_{t}^{T}\boldsymbol{\mu}_{t}=\mu_{d}\\
& \mathbf{w}_{t}^{T}\mathbf{1}_{M}=1\text{,}%
\end{array}
\label{MVPO}%
\end{equation}
where $\mu_{d}$ represents the target or desired level of expected portfolio
return%
\footnote
{As conventionally, and for the sake of clarity of presentation, we will assume that logarithmic returns are well approximated by their linear counterparts, so that we can claim about the additivity of returns over both portfolio assets and intertemporally.}%
, and $\mathbf{w}_{t}^{T}\mathbf{1}_{M}=1$\ is a budget constraint.

We shall assume without loss of generality that the forecasting sampling
frequency coincides with the rebalancing frequency. In particular, mean vector
and covariance matrix are forecasted with the return data over a prescribed
estimation window up to the time of the investment decision. Since we only
consider the case of a single-period investment horizon, in the sequel we will
omit the subscript and let $\mathbf{w}_{t}=\mathbf{w}$ for notational
convenience. The solution to (\ref{MVPO}) is straightforwardly given by%

\begin{equation}
\mathbf{w}_{\mathsf{MV}}=\frac{C-\mu_{d}B}{AC-B^{2}}\mathbf{\Sigma}_{t}%
^{-1}\mathbf{1}_{M}+\frac{\mu_{d}A-B}{AC-B^{2}}\mathbf{\Sigma}_{t}%
^{-1}\boldsymbol{\mu}_{t}\text{,} \label{MVP}%
\end{equation}
where $A=\mathbf{1}_{M}^{T}\mathbf{\Sigma}_{t}^{-1}\mathbf{1}_{M}$,
$B=\mathbf{1}_{M}^{T}\mathbf{\Sigma}_{t}^{-1}\boldsymbol{\mu}_{t}$ and
$C=\boldsymbol{\mu}_{t}{}^{T}\mathbf{\Sigma}_{t}^{-1}\boldsymbol{\mu}_{t}$. In
particular, if the constraint on the level of return achieved is dropped, then
we obtain the so-called \textit{global minimum variance portfolio} (GMVP),
which is given by%
\begin{equation}
\mathbf{w}_{\mathsf{GMVP}}=\frac{\mathbf{\Sigma}_{t}^{-1}\mathbf{1}_{M}%
}{\mathbf{1}_{M}^{T}\mathbf{\Sigma}_{t}^{-1}\mathbf{1}_{M}}\text{.}
\label{GMVP}%
\end{equation}
In fact, the latter is clearly also the solution to the general mean-variance
problem if $\mathbf{\mu}_{t}=0$, as it is often assumed over short investment
periods. Other special case of particular interest due to its implications in
asset pricing theory is that of the \textit{tangency portfolio} (TP), which is
given by%
\begin{equation}
\mathbf{w}_{\mathsf{TP}}=\frac{\mathbf{\Sigma}_{t}^{-1}\boldsymbol{\mu}_{t}%
}{\mathbf{1}_{M}^{T}\mathbf{\Sigma}_{t}^{-1}\boldsymbol{\mu}_{t}}\text{.}
\label{TP}%
\end{equation}

In practice, $\boldsymbol{\mu}_{t}$ and $\mathbf{\Sigma}_{t}$ are unknown and
so they must be estimated from market data observations. Let $\boldsymbol{\hat
{\mu}}_{t}$ and $\mathbf{\hat{\Sigma}}_{t}$ denote the forecasted values of
the expected mean and the covariance matrix, respectively. Moreover, let
$\mathbf{\hat{w}}_{\mathsf{GMVP}}$ and $\mathbf{\hat{w}}_{\mathsf{TP}}$
represent the sample construction of (\ref{GMVP}) and (\ref{TP}),
respectively, based on the previous moment estimates. In the following, we
briefly elaborate on the classical forecasting settings that are customarily
applied to estimate the input parameters of the Markowitz portfolio
optimization framework. Specifically, we consider in the first place the
conventional assumption according to which the returns over consecutive
investment periods are independent and identically distributed, and the two
required moments are obtained by their respective unconditional estimators.
Then, we turn our attention to conditional forecasting models based on linear
and stationary stochastic processes; finally, we shortly comment on
heteroscedastic models allowing for some time-variability of the multivariate
volatility process.

Before proceeding further, let us introduce some useful notation. We will
denote by $\left\{  \mathcal{F}_{t-1}\right\}  $ the information set of events
up to the discrete-time instant $t-1$, i.e., the $\sigma$-field generated by
the observed series $\left\{  \mathbf{y}_{l}\right\}  _{l<t}$. Conditional on
the observation available up to the investment decision time, the covariance
matrix of the stochastic process $\mathbf{y}_{t}$ is given by definition by
$\mathbf{\Sigma}_{t}=\operatorname*{var}\left(  \left.  \mathbf{y}%
_{t}\right\vert \mathcal{F}_{t-1}\right)  =\operatorname*{var}\left(  \left.
\boldsymbol{\varepsilon}_{t}\right\vert \mathcal{F}_{t-1}\right)  $.
Additionally, we let $\mathbf{Y}_{N}=\left[  \mathbf{y}_{t-N},\ldots
,\mathbf{y}_{t-1}\right]  $ denote the sample data matrix with the $N$ past
return observations.

\subsection{The case of IID returns: weighted sampling and shrinkage
estimation%
\label{ssecUE}%
}

Under the classical assumption of i.i.d. returns, mean vector and covariance
matrix are both modeled as constant over the entire estimation interval, i.e.,
$\boldsymbol{\mu}_{l}=\boldsymbol{\mu}$, $\mathbf{\Sigma}_{l}=\mathbf{\Sigma}%
$, $l=t-N,\ldots,t-1$. Hence, the standard forecasts of the moments are given
in terms of a rolling-window by the (unconditional) sample mean and sample
covariance matrix, i.e., respectively,%
\begin{equation}
\boldsymbol{\hat{\mu}}=\frac{1}{N}%
{\displaystyle\sum\limits_{n=t-N}^{t-1}}
\mathbf{y}_{n}=\frac{1}{N}\mathbf{Y}_{N}\mathbf{1}_{N}\text{,} \label{SM}%
\end{equation}
and%
\begin{equation}
\mathbf{\hat{\Sigma}}=\frac{1}{N}%
{\displaystyle\sum\limits_{n=t-N}^{t-1}}
\left(  \mathbf{y}_{n}-\mathbf{\hat{\mu}}\right)  \left(  \mathbf{y}%
_{n}-\mathbf{\hat{\mu}}\right)  ^{T}=\frac{1}{N}\mathbf{Y}_{N}\left(
\mathbf{I}_{N}-\frac{1}{N}\mathbf{1}_{N}\mathbf{1}_{N}^{T}\right)
\mathbf{Y}_{N}^{T}\text{.} \label{SCM}%
\end{equation}

A classical extension of the standard estimators in (\ref{SM}) and (\ref{SCM})
considers the effect of weighting the sample observations. Let $\mathbf{W}%
_{\boldsymbol{\mu},N}\in\mathbb{R}^{N\times N}$ and $\mathbf{\mathbf{W}%
}_{\mathbf{\Sigma},N}\in\mathbb{R}^{N\times N}$ be two diagonal matrices with
entries given by a set of nonnegative coefficients, respectively,
$w_{\boldsymbol{\mu},n}$ and $w_{\mathbf{\Sigma},n}$, $n=1,\ldots,N$.
Specifically, the weighted sample mean and weighted sample covariance matrix
are respectively defined as%
\begin{equation}
\boldsymbol{\hat{\mu}}_{\mathsf{W}}=\frac{1}{N}%
{\displaystyle\sum\limits_{n=t-N}^{t-1}}
w_{\boldsymbol{\mu},n}\mathbf{y}_{n}=\frac{1}{N}\mathbf{Y}_{N}%
\mathbf{\mathbf{W}}_{\boldsymbol{\mu},N}\mathbf{1}_{N}\text{.} \label{WSM}%
\end{equation}
and%
\begin{align}
\mathbf{\hat{\Sigma}}_{\mathsf{W}}  &  =\frac{1}{N}%
{\displaystyle\sum\limits_{n=t-N}^{t-1}}
w_{\mathbf{\Sigma},n}\left(  \mathbf{y}_{n}-\boldsymbol{\hat{\mu}}%
_{\mathsf{W}}\right)  \left(  \mathbf{y}_{n}-\boldsymbol{\hat{\mu}%
}_{\mathsf{W}}\right)  ^{T}\nonumber\\
&  =\frac{1}{N}\mathbf{Y}_{N}\left(  \mathbf{I}_{N}-\frac{1}{N}%
\mathbf{\mathbf{W}}_{\boldsymbol{\mu},N}\mathbf{1}_{N}\mathbf{1}_{N}%
^{T}\right)  \mathbf{W}_{\mathbf{\Sigma},N}\left(  \mathbf{I}_{N}-\frac{1}%
{N}\mathbf{1}_{N}\mathbf{1}_{N}^{T}\mathbf{\mathbf{W}}_{\boldsymbol{\mu}%
,N}\right)  \mathbf{Y}_{N}^{T}\text{.} \label{WSCM}%
\end{align}
Weighted estimators are usually applied in order to reduce variability and
improve the stability of parameter estimators, for instance by using
stratified random sampling \cite{S03B}. A related structure is the one
obtained by the nonparametric bootstrap, for which the weights represent the
number of times the corresponding observation appears in the bootstrap sample
\cite{ET93B}\footnote{We assume that the choice of weights is given; possible
weighting schemes range from the standard simple random sampling with
replacement (i.e., uniform resampling following a multinomial distribution) to
sampling from the empirical distribution of the asset returns with nonuniform
weights by for instance assigning different resampling probabilities to the
different observations using importance sampling (see, e.g., \cite{H92B,VW96B}
for more details)}. In the context of asset allocation, \cite{M89} (see also
\cite{M93B}) suggests averaging a sequence of portfolios obtained by
resampling with replacement from the originally available sample. Regarded as
bootstrap aggregating of \textit{bagging}, such averages are used in
statistics for variance reduction purposes as well as to stabilize the
prediction out-of-sample performance as a remedy to overfitting (see, e.g.,
Chapter 10 in \cite{R04B}). In particular, the bootstrap is typically used to
provide small-sample corrections for possibly consistent but biased
estimators. However, in high-dimensional settings, the standard application of
the bootstrap generally yields inconsistent estimates of bias. An asymptotic
refinement of the conventional bootstrap-based bias correction (see, e.g.,
\cite{H01} for standard methodology) is provided in \cite{BLW09b} by resorting
to random matrix theoretical results.

A common further extension to (possibly weighted) sample estimation relies on
the widespread family of Steinian (James-Stein-type) shrinkage estimators of
the mean and covariance matrix of the observed samples. By means of
regularizing or shrinking the estimators (\ref{WSM}) and (\ref{WSCM}), we
define:%
\begin{equation}
\boldsymbol{\hat{\mu}}_{\mathsf{SHR}}=\left(  1-\delta\right)
\boldsymbol{\hat{\mu}}_{\mathsf{W}}+\delta\boldsymbol{\mu}_{0}\text{,}%
\label{meanSHR}%
\end{equation}
and%
\begin{equation}
\mathbf{\hat{\Sigma}}_{\mathsf{SHR}}=\left(  1-\rho\right)  \mathbf{\hat
{\Sigma}}_{\mathsf{W}}+\rho\mathbf{\Sigma}_{0}\text{,}\label{covSHR}%
\end{equation}
where the nonrandom vector $\boldsymbol{\mu}_{0}$ and the positive matrix
$\mathbf{\Sigma}_{0}$ are the shrinkage targets or, from a Bayesian
perspective, the prior knowledge about the unknown $\boldsymbol{\mu}$ and
$\mathbf{\Sigma}$, respectively,\ where $\delta$ are $\rho$ are the shrinkage
intensity parameters. Clearly, if the shrinkage intensity parameters are equal
to $1$ and $\mathbf{W}_{\boldsymbol{\mu},N}=\mathbf{\mathbf{W}}%
_{\mathbf{\Sigma},N}=\mathbf{I}_{N}$, then the standard sample estimators are
recovered. A typical example of shrinkage target for the covariance estimation
is $\mathbf{\Sigma}_{0}=\mathbf{I}_{M}$. Shinkage estimators in the context of
portfolio optimization were first proposed in \cite{J82} (see also \cite{LW03}
for the covariance matrix, and \cite{W07} for a study of the combination of
resampling and shrinkage).

As mentioned in the introduction, it has been recognized in the financial
literature that, under severe estimation risk conditions, the estimated
Markowitz's optimal portfolio rule and its various sophisticated extensions
underperform out-of-the-sample the naive rule based on the \textit{equally
weighted portfolio} (EWP) choice. In an effort to incorporate this well-known
fact into the portfolio selection process, some authors have considered
optimizing a combination\footnote{The rational behind this approach lies on
the so-called fund-separation theorems in finance (see \cite{L97B}).} of one
or more sample portfolios, such as $\mathbf{\hat{w}}_{\mathsf{GMVP}}$ and
$\mathbf{\hat{w}}_{\mathsf{TP}}$, and the uniformly weighted asset allocation
given by $\mathbf{w}_{\mathsf{EWP}}=\mathbf{1}_{M}/M$ (see
\cite{DGU09,KZ07,TZ11}).

\subsection{Accounting for serial dependence: conditional models%
\label{ssecCE}%
}

The previous unconditional estimators of the moments of the asset returns are
particularly well-suited for situations of static nature. Under a more general
setting challenging the i.i.d. assumption, although a period-by-period
computation of the sample statistics by means of a rolling window can indeed
allow for some return predictability, the dynamic behavior of the input
parameters is best modeled in practice by taking into account conditional
information. For the sake of a more precise motivation, we first recall some
empirically observed properties or attributes of time series of asset returns,
the so-called stylized facts in the theory and practice of finance (see
\cite{C01}, and also \cite{MFE05B} for a textbook exposition)%
\footnote
{Although these facts approximately hold unchanged for different time intervals, some characteristics might arguably vary depending on the sampling frequency. According to the time elapsed between return observations, one might differentiate among long-term returns (e.g., weekly, monthly or yearly returns) and short-term returns (i.e., daily returns) - we have omitted purposely a further category including high-frequency data (i.e., intraday, tick data), as it requires different statistical methodologies which we will not consider in this work.}%
. Concerning their distributional properties, it has been observed that return
series are leptokurtic or heavy-tailed (except for long time intervals, for
which the log-normal assumption seems reasonable, at least for
well-diversified portfolios), and extreme return values usually appear in
clusters. Regarding their dynamics, conditional expected returns are usually
negligible (at least relative to volatility values), and, more importantly,
are not independent though exhibit little serial correlation. Conversely,
squared returns, which are often used as a proxy of the unobserved covariance,
show profound evidence of positive serial correlation with high persistence.

If we set aside the time variability of conditional covariances (i.e.,
particularly for long-term horizons), the dynamic dependence structure of the
asset returns can be captured irrespectively of whether its origin is
momentum, mean-reversion, or lead-lag relations by conditionally modeling the
mean via a vector auto-regressive moving-average process (VARMA) with both
orders equal one:%
\begin{equation}
\mathbf{y}_{t}=\boldsymbol{\bar{\mu}}+\mathbf{\Phi}_{M}\mathbf{y}%
_{t-i}+\boldsymbol{\varepsilon}_{t}-\mathbf{\Pi}_{M}\boldsymbol{\varepsilon
}_{t-j}\text{,} \label{VARMAsr}%
\end{equation}
where $\mathbf{\Phi}_{M}$ and $\mathbf{\Pi}_{M}$ are square fixed parameter
matrices, and $\boldsymbol{\varepsilon}_{t}=\mathbf{\Sigma}^{1/2}%
\mathbf{x}_{t}$. The process is customarily assumed to be weakly (second-order
or covariance) stationary and ergodic, as well as stable and invertible (see
\cite{BD91B} for detailed characterization of multivariate time series
models). VARMA processes of higher orders than the VARMA$\left(  1,1\right)  $
in (\ref{VARMAsr}) are reported in the literature to be of less practical
interest \cite{HSW09}, and even further restrictions leading to first-order
vector autoregressions (i.e., $\mathbf{\Theta}_{M}=\mathbf{0}$) are most often
considered (see \cite{CV02B}, and the more recent account in \cite{DNU10}). In
the large dimensional portfolio setting, parsimony is crucial to maintain the
efficiency and low complexity of the model estimation process, and so
different simplifications based on structural restrictions are usually
considered in practice. In particular, in the case of processes with scalar
parameter matrices $\mathbf{\Phi}_{M}=\phi\mathbf{I}_{M}$ and $\mathbf{\Pi
}_{M}=\pi\mathbf{I}_{M}$, then the population covariance matrix has a
particular sparse structure, separable into cross-sectional $\mathbf{\Sigma}$
and temporal covariance components, which we will denote by $\mathbf{\Omega}$.
Under the Gaussian assumption, the sample covariance matrix of members of this
class of VARMA processes are doubly-correlated Wishart matrices. This matrix
ensemble has been recently analyzed in the statistical physics literature in
the context of financial applications \cite{BJJNPZ10}.

According to the VARMA model, the conditional covariance remains constant
regardless of the data. Especially for short-term horizons (e.g., daily
returns), the observed features of the volatility process are best accounted
for by conditional heteroscedastic models, such as the class of specifications
for the multivariate extension of generalized autoregressive conditionally
heteroscedastic (GARCH) process \cite{BLR06}, and the exponential weighted
moving average (EWMA) scheme:%
\begin{equation}
\mathbf{\Sigma}_{t}=\lambda\mathbf{y}_{t-1}\mathbf{y}_{t-1}^{T}+\left(
1-\lambda\right)  \mathbf{\Sigma}_{t-1}=\lambda%
{\displaystyle\sum\limits_{n=t-N}^{t-1}}
\left(  1-\lambda\right)  ^{n-1}\mathbf{y}_{n}\mathbf{y}_{n}^{T}\text{,}
\label{EWMA}%
\end{equation}
where $\lambda$\ is a smoothing prescribed parameter that characterizes the
decay of the exponential memory%
\footnote
{Other possible and common choices of the weights for the past returns (adding up to 1) are equal weights (i.e. a rectangular window with equal weights), exponential weights (i.e. equivalent to an exponential moving average), weights following a power-law decay, or long memory weights (decaying logarithmically slowly).}%
. Firstly proposed in \cite{JPM06}, the EWMA model has been found very useful
in estimating the market risk of portfolios, as well as in portfolio
optimization \cite{LW98}.

\subsection{Evaluating the performance of sample portfolios%
\label{ssecPE}%
}

The quality of a portfolio rule $\mathbf{\hat{w}}$ constructed based on
\textit{in-sample} forecasts of $\boldsymbol{\mu}_{t}$ and $\mathbf{\Sigma
}_{t}$ can be measured by its achieved out-of-sample (realized) mean return
$\mu_{P}\left(  \mathbf{\hat{w}}\right)  =\mathbf{\hat{w}}^{T}\boldsymbol{\mu
}_{t}$ and risk $\sigma_{P}\left(  \mathbf{\hat{w}}\right)  =\sqrt
{\mathbf{\hat{w}}^{T}\mathbf{\Sigma}_{t}\mathbf{\hat{w}}}$. In the study and
practice of finance, measures of risk-adjusted achieved return are usually
employed, being the Sharpe ratio a prominent one in portfolio management:%
\begin{equation}
\mathsf{SR}\left(  \mathbf{\hat{w}}\right)  =\frac{\mu_{P}\left(
\mathbf{\hat{w}}\right)  }{\sigma_{P}\left(  \mathbf{\hat{w}}\right)
}\text{.} \label{sharpe}%
\end{equation}
In particular, notice that the tangency portfolio defined in (\ref{TP}) is the
portfolio that maximizes (\ref{sharpe}) under the budget constraint.

As discussed in the introduction, for a small sample-size and relatively large
universe of assets, the out-of-sample performance of standard portfolio
constructions can be expected to considerably differ from the theoretical
performance given by the true moments. In this paper, we extend existing
analyses of the statistical properties of portfolio rules based on the
standard sample mean and sample covariance matrix estimators, and characterize
the performance deviations due to estimation risk in terms of nonrandom model
and scenario parameters. We will concentrate on the case of the unconditional
moment estimators (\ref{meanSHR}) and (\ref{covSHR}) and conditional VARMA
models with separable covariance structure. Specifically, we derive asymptotic
deterministic equivalents of the out-of-sample performance of improved
portfolio implementations that are based on the previous estimators. We remark
here that usual choices among practitioners of conditional heteroscedastic
models, such as the EWMA model in (\ref{EWMA}), can also be fitted into our
asymptotic framework by resorting to random matrix theoretical results dealing
with general variance profiles\ (see \cite{HLN07}). Furthermore, we provide a
mechanism to calibrate the set of weights and shrinkage parameters defining
the improved portfolio constructions so as to optimize the achieved
out-of-sample performance.

\section{Main results: Out-of-sample analysis and asymptotic corrections}

In this section, we provide the main two results of the paper on the
asymptotic characterization of the performance of sample portfolios and the
proposed family of generalized consistent portfolio estimators are stated in
Section \ref{secADE} and Section \ref{secGCE}, respectively. We first
summarize the technical hypotheses supporting our research and introduce some
new definitions:

\begin{description}
\item[\textbf{(As1)}] Let $\mathbf{R}_{M}\in\mathbb{R}^{M\times M}$ and
$\mathbf{T}_{N}\in\mathbb{R}^{N\times N}$ be two deterministic nonnegative
matrices having spectral norm bounded uniformly in $M$ and $N$, i.e.,
$\left\Vert \mathbf{R}_{M}\right\Vert _{\sup}=\sup_{M\geq1}\left\Vert
\mathbf{R}_{M}\right\Vert <+\infty$ and $\left\Vert \mathbf{T}_{N}\right\Vert
_{\sup}=\sup_{N\geq1}\left\Vert \mathbf{T}_{N}\right\Vert <+\infty$,
respectively; the matrix $\mathbf{T}_{N}$ is diagonal with entries denoted by
$t_{n}$, $1\leq n\leq N$.

\item[\textbf{(As2)}] Let $\mathbf{X}_{M}$ be an $M\times N$ matrix whose
elements $X_{ij}$, $1\leq i\leq M$, $1\leq j\leq N$, are i.i.d. standardized
Gaussian random variables.

\item[\textbf{(As3)}] We will consider the limiting regime defined by both
dimensions $M$ and $N$ growing large without bound at the same rate, i.e.,
$N,M\rightarrow\infty$ such that $0<\lim\inf c_{M}\leq\lim\sup c_{M}<\infty$,
with $c_{M}=M/N$. Quantities that, under the previous double-limit regime, are
asymptotically equivalent to a given a random variable, both depending on $M$
and $N$, will be referred to as \textit{asymptotic deterministic equivalents},
if only depend upon nonrandom model variables, and \textit{generalized
consistent estimators}, if they depend on observable random variables (e.g.,
sample data matrix).
\end{description}

Before proceeding with the out-of-sample performance characterization, we
identify next the key quantities of study into which the Sharpe ratio
performance measure in (\ref{sharpe}) can be decomposed. Let us first consider
the unconditional model, where $\left\{  \boldsymbol{\mu}_{t},\mathbf{\Sigma
}_{t}\right\}  $ are considered to be constant over the estimation window.
Then, notice that the data observation matrix can be written as $\mathbf{Y}%
_{N}=\boldsymbol{\mu}\mathbf{1}_{N}^{T}+\mathbf{\Sigma}^{1/2}\mathbf{X}_{N}$,
where $\mathbf{X}_{N}=\left[  \mathbf{x}_{t-N},\ldots,\mathbf{x}_{t-1}\right]
$. For the sake of clarity of presentation, we will assume that the standard
sample mean in (\ref{SM}) instead of its weighted version is applied in the
definition of $\mathbf{\hat{\Sigma}}_{\mathsf{W}}$. Moreover, we also assume
that the entries of $\mathbf{\mathbf{W}}_{\boldsymbol{\mu},N}$ are chosen to
be the eigenvalues of the matrix $\left(  \mathbf{I}_{N}-\frac{1}{N}%
\mathbf{1}_{N}\mathbf{1}_{N}^{T}\right)  \mathbf{W}_{\mathbf{\Sigma},N}\left(
\mathbf{I}_{N}-\frac{1}{N}\mathbf{1}_{N}\mathbf{1}_{N}^{T}\right)  $, which
will be denoted by $\mathbf{T}_{N}$. In particular, observe that, in this
case,%
\[
\mathbf{\hat{\Sigma}}_{\mathsf{W}}=\frac{1}{N}\mathbf{\Sigma}^{1/2}%
\mathbf{X}_{N}\mathbf{T}_{N}\mathbf{X}_{N}^{T}\mathbf{\Sigma}^{1/2}\text{,}%
\]
where we have used the fact that%
\begin{equation}
\left(  \boldsymbol{\mu}\mathbf{1}_{N}^{T}+\mathbf{\Sigma}^{1/2}\mathbf{X}%
_{N}\right)  \left(  \mathbf{I}_{N}-\frac{1}{N}\mathbf{1}_{N}\mathbf{1}%
_{N}^{T}\right)  =\mathbf{\Sigma}^{1/2}\mathbf{X}_{N}\left(  \mathbf{I}%
_{N}-\frac{1}{N}\mathbf{1}_{N}\mathbf{1}_{N}^{T}\right)  \text{,}
\label{matrix}%
\end{equation}
along with the invariance of the multivariate Gaussian distribution to
orthogonal transformations. Notice that, under the Gaussian assumption, the
matrix in \ref{matrix} is matrix-variate normal distributed, i.e.,
$\mathbf{\Sigma}^{1/2}\mathbf{X}_{N}\left(  \mathbf{I}_{N}-\frac{1}%
{N}\mathbf{1}_{N}\mathbf{1}_{N}^{T}\right)  \sim\mathcal{MN}_{M\times
N}\left(  \mathbf{0}_{M\times N},\mathbf{\Sigma},\mathbf{I}_{N}\right)  $, or
equivalently, $\operatorname*{vec}\left(  \mathbf{\Sigma}^{1/2}\mathbf{X}%
_{N}\left(  \mathbf{I}_{N}-\frac{1}{N}\mathbf{1}_{N}\mathbf{1}_{N}^{T}\right)
\right)  \sim\mathcal{N}_{MN}\left(  \mathbf{0}_{MN},\mathbf{\Sigma}%
\otimes\mathbf{I}_{N}\right)  $; see \cite[Section 3.3.2]{GN00B}.
Consequently, $\mathbf{\hat{\Sigma}}_{\mathsf{W}}$ is a central quadratic
forms (central Wishart distributed if $\mathbf{T}_{N}=\mathbf{I}_{N}$).

Now, let $\mathbf{R}_{M}=\mathbf{\Sigma}_{0}^{-1/2}\mathbf{\Sigma\Sigma}%
_{0}^{-1/2}$, and also, with some abuse of notation, $\mathbf{R}_{M}%
^{1/2}=\mathbf{\Sigma}_{0}^{-1/2}\mathbf{\Sigma}^{1/2}$, and consider further
the nonnegative scalars $\alpha_{M}=\rho/\left(  1-\rho\right)  $ and
$\beta_{M}=\delta/\left(  1-\delta\right)  $. Moreover, we define
$\mathbf{\tilde{Y}}_{N}=\mathbf{R}_{M}^{1/2}\mathbf{X}_{N}\mathbf{T}_{N}%
^{1/2}$, along with $\mathbf{\hat{\Sigma}}_{M}=\tfrac{1}{N}\mathbf{\tilde{Y}%
}_{N}\mathbf{\tilde{Y}}_{N}^{T}+\alpha_{M}\mathbf{I}_{M}$, and
$\boldsymbol{\hat{\upsilon}}_{M}=\frac{1}{N}\mathbf{\tilde{Y}}_{N}%
\boldsymbol{\tilde{\upsilon}}_{N}$, where $\boldsymbol{\tilde{\upsilon}}%
_{N}=\sqrt{N}\boldsymbol{\upsilon}_{N}$, with $\boldsymbol{\upsilon}_{N}$
being an $N$ dimensional nonrandom vector with unit norm. Then, it is
straightforward to see that the numerator and denominator of (\ref{sharpe})
can be written for the class of sample implementations of the optimal
portfolio in (\ref{MVP}) based on the unconditional estimators (\ref{meanSHR})
and (\ref{covSHR}) in terms of the following random variables:%
\begin{align}
\hat{\xi}_{M}^{\left(  1\right)  } &  =\boldsymbol{\upsilon}_{M}%
^{T}\mathbf{\hat{\Sigma}}_{M}^{-1}\boldsymbol{\upsilon}_{M}\text{,}%
\label{xi1}\\
\hat{\xi}_{M}^{\left(  2\right)  } &  =\boldsymbol{\upsilon}_{M}%
^{T}\mathbf{\hat{\Sigma}}_{M}^{-1}\boldsymbol{\hat{\upsilon}}_{M}%
\text{,}\label{xi2}\\
\hat{\xi}_{M}^{\left(  3\right)  } &  =\boldsymbol{\hat{\upsilon}}_{M}%
^{T}\mathbf{\hat{\Sigma}}_{M}^{-1}\boldsymbol{\hat{\upsilon}}_{M}%
\text{,}\label{xi3}\\
\hat{\xi}_{M}^{\left(  4\right)  } &  =\boldsymbol{\upsilon}_{M}%
^{T}\mathbf{\hat{\Sigma}}_{M}^{-1}\mathbf{R}_{M}\mathbf{\hat{\Sigma}}_{M}%
^{-1}\boldsymbol{\upsilon}_{M}\text{,}\label{xi4}\\
\hat{\xi}_{M}^{\left(  5\right)  } &  =\boldsymbol{\upsilon}_{M}%
^{T}\mathbf{\hat{\Sigma}}_{M}^{-1}\mathbf{R}_{M}\mathbf{\hat{\Sigma}}_{M}%
^{-1}\boldsymbol{\hat{\upsilon}}_{M}\text{,}\label{xi5}\\
\hat{\xi}_{M}^{\left(  6\right)  } &  =\boldsymbol{\hat{\upsilon}}_{M}%
^{T}\mathbf{\hat{\Sigma}}_{M}^{-1}\mathbf{R}_{M}\mathbf{\hat{\Sigma}}_{M}%
^{-1}\boldsymbol{\hat{\upsilon}}_{M}\text{.}\label{xi6}%
\end{align}
Notice that similar reasoning applies to the conditional model for the asset
return in (\ref{VARMAsr}), since the sample covariance matrix of the process
is a doubly-correlated Wishart matrix (cf. Section \ref{ssecCE}), and the
conditional mean estimator can be written from the VAR$\left(  1\right)  $
model specification as $\boldsymbol{\hat{\mu}}_{t}=\boldsymbol{\tilde{\mu}%
}+\mathbf{\Sigma}^{1/2}\mathbf{X}_{N}\mathbf{1}_{\Lambda,N}$, where
$\boldsymbol{\tilde{\mu}}=%
{\textstyle\sum\nolimits_{n=1}^{N}}
\phi^{n-1}\boldsymbol{\bar{\mu}}$ and $\mathbf{1}_{\Lambda,N}=\Lambda
_{N}\mathbf{1}_{N}$, with $\Lambda_{N}\in\mathbb{R}^{N\times N}$ being a
diagonal matrix such that $\left[  \Lambda_{N}\right]  _{n}=\phi^{n}$. In
general, the vector $\boldsymbol{\upsilon}_{M}$ takes values in $\left\{
\mathbf{\Sigma}_{0}^{-1/2}\mathbf{1}_{M}/\sqrt{M},\mathbf{\Sigma}_{0}%
^{-1/2}\boldsymbol{\mu},\mathbf{\Sigma}_{0}^{-1/2}\boldsymbol{\mu}%
_{0},\mathbf{\Sigma}_{0}^{-1/2}\boldsymbol{\tilde{\mu}}\right\}  $, and
$\boldsymbol{\upsilon}_{N}$ in $\left\{  \mathbf{1}_{N}/\sqrt{N}%
,\mathbf{1}_{\Lambda,N}\right\}  $.

By way of example, consider the estimation of the quantities $\left\{
A,B,C\right\}  $ defining the optimal mean-variance portfolio in (\ref{MVP}),
based on the unconditional estimators (\ref{meanSHR}) and (\ref{covSHR}) with
$\delta=0$. Let us denote the estimators by $\left\{  \hat{A},\hat{B},\hat
{C}\right\}  $. In particular, observe that $\left(  1-\rho\right)  \hat
{A}=\mathbf{1}_{M}^{T}\mathbf{\Sigma}_{0}^{-1/2}\left(  \mathbf{\Sigma}%
_{0}^{-1/2}\mathbf{\hat{\Sigma}}_{\mathsf{W}}\mathbf{\Sigma}_{0}^{-1/2}%
+\alpha_{M}\mathbf{I}_{M}\right)  ^{-1}\mathbf{\Sigma}_{0}^{-1/2}%
\mathbf{1}_{M}$, and so we readily have $\left(  1-\rho\right)  \hat{A}%
=M\hat{\xi}_{M}^{\left(  1\right)  }$, with $\boldsymbol{\upsilon}%
_{M}=\mathbf{\Sigma}_{0}^{-1/2}\mathbf{1}_{M}/\sqrt{M}$. Moreover, note that
($\mathbf{\mathbf{W}}_{\boldsymbol{\mu},N}\mathbf{1}_{N}=\mathbf{T}_{N}%
^{1/2}\mathbf{\tilde{1}}_{N}$, where $\mathbf{\tilde{1}}_{N}=\mathbf{T}%
_{N}^{-1/2}\mathbf{\mathbf{W}}_{\boldsymbol{\mu},N}\mathbf{1}_{N}$) ($\frac
{1}{N}\mathbf{1}_{N}^{T}\mathbf{\mathbf{W}}_{\boldsymbol{\mu},N}\mathbf{1}%
_{N}=1$ -in definition above-)%
\begin{align*}
\left(  1-\rho\right)  \hat{B}  & =\mathbf{1}_{M}^{T}\mathbf{\Sigma}%
_{0}^{-1/2}\left(  \mathbf{\Sigma}_{0}^{-1/2}\mathbf{\hat{\Sigma}}%
_{\mathsf{W}}\mathbf{\Sigma}_{0}^{-1/2}+\alpha_{M}\mathbf{I}_{M}\right)
^{-1}\mathbf{\Sigma}_{0}^{-1/2}\boldsymbol{\hat{\mu}}_{\mathsf{W}}\\
& =\mathbf{1}_{M}^{T}\mathbf{\Sigma}_{0}^{-1/2}\left(  \mathbf{\Sigma}%
_{0}^{-1/2}\mathbf{\hat{\Sigma}}_{\mathsf{W}}\mathbf{\Sigma}_{0}^{-1/2}%
+\alpha_{M}\mathbf{I}_{M}\right)  ^{-1}\mathbf{\Sigma}_{0}^{-1/2}%
\boldsymbol{\mu}+\frac{1}{N}\mathbf{1}_{M}^{T}\mathbf{\Sigma}_{0}%
^{-1/2}\left(  \mathbf{\Sigma}_{0}^{-1/2}\mathbf{\hat{\Sigma}}_{\mathsf{W}%
}\mathbf{\Sigma}_{0}^{-1/2}+\alpha_{M}\mathbf{I}_{M}\right)  ^{-1}%
\mathbf{\Sigma}_{0}^{-1/2}\mathbf{\Sigma}^{1/2}\mathbf{X}_{N}\mathbf{T}%
_{N}^{1/2}\mathbf{\tilde{1}}_{N}\text{,}%
\end{align*}
and therefore we have $\left(  1-\rho\right)  \hat{B}=\sqrt{M}\hat{\xi}%
_{M}^{\left(  1\right)  }+\sqrt{M}\hat{\xi}_{M}^{\left(  2\right)  }$, where
the vector $\boldsymbol{\upsilon}_{M}$ take values $\boldsymbol{\upsilon}%
_{M}=\mathbf{\Sigma}_{0}^{-1/2}\mathbf{1}_{M}/\sqrt{M}$ and
$\boldsymbol{\upsilon}_{M}=\mathbf{\Sigma}_{0}^{-1/2}\boldsymbol{\mu}$, and
$\boldsymbol{\tilde{\upsilon}}_{N}=\mathbf{\tilde{1}}_{N}$.

Furthermore, for the estimator of $C$, we have that ($\boldsymbol{\hat{\mu}%
}_{\mathsf{W}}=\frac{1}{N}\left(  \boldsymbol{\mu}\mathbf{1}_{N}%
^{T}+\mathbf{\Sigma}^{1/2}\mathbf{X}_{N}\right)  \mathbf{\mathbf{W}%
}_{\boldsymbol{\mu},N}\mathbf{1}_{N}=\frac{1}{N}\boldsymbol{\mu}\mathbf{1}%
_{N}^{T}\mathbf{\mathbf{W}}_{\boldsymbol{\mu},N}\mathbf{1}_{N}+\frac{1}%
{N}\mathbf{\Sigma}^{1/2}\mathbf{X}_{N}\mathbf{\mathbf{W}}_{\boldsymbol{\mu}%
,N}\mathbf{1}_{N}=\boldsymbol{\mu}+\frac{1}{N}\mathbf{\Sigma}^{1/2}%
\mathbf{X}_{N}\mathbf{T}_{N}^{1/2}\mathbf{\tilde{1}}_{N}$) ($\left\Vert
\boldsymbol{\mu}\right\Vert $)%
\begin{align*}
\left(  1-\rho\right)  \hat{C}  & =\boldsymbol{\hat{\mu}}_{\mathsf{W}}%
^{T}\mathbf{\Sigma}_{0}^{-1/2}\left(  \mathbf{\Sigma}_{0}^{-1/2}%
\mathbf{\hat{\Sigma}}_{\mathsf{W}}\mathbf{\Sigma}_{0}^{-1/2}+\alpha
_{M}\mathbf{I}_{M}\right)  ^{-1}\mathbf{\Sigma}_{0}^{-1/2}\boldsymbol{\hat
{\mu}}_{\mathsf{W}}\\
& =\boldsymbol{\mu}^{T}\mathbf{\Sigma}_{0}^{-1/2}\left(  \mathbf{\Sigma}%
_{0}^{-1/2}\mathbf{\hat{\Sigma}}_{\mathsf{W}}\mathbf{\Sigma}_{0}^{-1/2}%
+\alpha_{M}\mathbf{I}_{M}\right)  ^{-1}\mathbf{\Sigma}_{0}^{-1/2}%
\boldsymbol{\mu}\\
& +2\frac{1}{N}\boldsymbol{\mu}^{T}\mathbf{\Sigma}_{0}^{-1/2}\left(
\mathbf{\Sigma}_{0}^{-1/2}\mathbf{\hat{\Sigma}}_{\mathsf{W}}\mathbf{\Sigma
}_{0}^{-1/2}+\alpha_{M}\mathbf{I}_{M}\right)  ^{-1}\mathbf{\Sigma}_{0}%
^{-1/2}\mathbf{\Sigma}^{1/2}\mathbf{X}_{N}\mathbf{T}_{N}^{1/2}\mathbf{\tilde
{1}}_{N}\\
& +\frac{1}{N^{2}}\mathbf{\tilde{1}}_{N}^{T}\mathbf{T}_{N}^{1/2}\mathbf{X}%
_{N}^{T}\mathbf{\Sigma}^{1/2}\mathbf{\Sigma}_{0}^{-1/2}\left(  \mathbf{\Sigma
}_{0}^{-1/2}\mathbf{\hat{\Sigma}}_{\mathsf{W}}\mathbf{\Sigma}_{0}%
^{-1/2}+\alpha_{M}\mathbf{I}_{M}\right)  ^{-1}\mathbf{\Sigma}_{0}%
^{-1/2}\mathbf{\Sigma}^{1/2}\mathbf{X}_{N}\mathbf{T}_{N}^{1/2}\mathbf{\tilde
{1}}_{N}\\
& =\hat{\xi}_{M}^{\left(  1\right)  }+\hat{\xi}_{M}^{\left(  2\right)  }%
+\hat{\xi}_{M}^{\left(  3\right)  }\text{.}%
\end{align*}
Finally, notice that, additionally, the term is required to model the variance
of the GMVP, and so is for modeling the return of the\ TP, but both terms can
be straightforwardly represented similarly as $\left\{  \hat{A},\hat{B}%
,\hat{C}\right\}  $.

$\left(  1-\rho\right)  ^{2}\mathbf{1}_{M}^{T}\mathbf{\hat{\Sigma}%
}_{\mathsf{SHR}}^{-1}\mathbf{\Sigma\hat{\Sigma}}_{\mathsf{SHR}}^{-1}%
\mathbf{1}_{M}=M\hat{\xi}_{M}^{\left(  4\right)  }$,

$\left(  1-\rho\right)  ^{2}\boldsymbol{\hat{\mu}}_{\mathsf{W}}^{T}%
\mathbf{\hat{\Sigma}}_{\mathsf{SHR}}^{-1}\mathbf{\Sigma\hat{\Sigma}%
}_{\mathsf{SHR}}^{-1}\boldsymbol{\hat{\mu}}_{\mathsf{W}}=\hat{\xi}%
_{M}^{\left(  4\right)  }+\hat{\xi}_{M}^{\left(  5\right)  }+\hat{\xi}%
_{M}^{\left(  6\right)  }$.

From above, the previous two assumptions on the weighting matrices clearly
facilitate exposition and tractability. However, we remark that more general
cases can be equivalently reduced to the above key quantities by algebraic
manipulations essentially relying on the matrix inversion lemma (cf. identity
(\ref{MIL}) in Appendix \ref{appTechAux}).

Now that the out-of-sample performance characterization problem has been
reduced to the study of the behavior of the quantities (\ref{xi1}) to
(\ref{xi6}), we proceed in the following two sections with their asymptotic
analysis and consistent estimation.

\subsection{Asymptotic performance analysis: a RMT approach%
\label{secADE}%
}

Define $\gamma=\gamma_{M}=\frac{1}{N}\operatorname*{tr}\left[  \mathbf{E}%
_{M}^{2}\right]  $ and $\tilde{\gamma}=\tilde{\gamma}_{M}=\frac{1}%
{N}\operatorname*{tr}\left[  \mathbf{\tilde{E}}_{N}^{2}\right]  $, where
$\mathbf{E}_{M}=\mathbf{R}_{M}\left(  \tilde{\delta}_{M}\mathbf{R}_{M}%
+\alpha\mathbf{I}_{M}\right)  ^{-1}$ and $\mathbf{\tilde{E}}_{N}%
=\mathbf{T}_{N}\left(  \mathbf{I}_{N}+\delta_{M}\mathbf{T}_{N}\right)  ^{-1}$,
with $\left\{  \tilde{\delta}_{M},\delta_{M}\right\}  $ being the unique
positive solution to the following system of equations \cite[Proposition
1]{HKLNP08}:
\begin{equation}
\left\{
\begin{array}
[c]{l}%
\tilde{\delta}_{M}=\frac{1}{N}\operatorname*{tr}\left[  \mathbf{T}_{N}\left(
\mathbf{I}_{N}+\delta_{M}\mathbf{T}_{N}\right)  ^{-1}\right]  \\
\delta_{M}=\frac{1}{N}\operatorname*{tr}\left[  \mathbf{R}_{M}\left(
\tilde{\delta}_{M}\mathbf{R}_{M}+\alpha\mathbf{I}_{M}\right)  ^{-1}\right]
\text{.}%
\end{array}
\right.  \label{systemEq}%
\end{equation}
Then, we have the following result characterizing the asymptotic behavior of
the random variables (\ref{xi1}) to (\ref{xi6}).%

\begin{theorem}%
\label{theoADE}%
(\textit{Asymptotic Deterministic Equivalents}) Under Assumptions
(\textbf{As1}) to (\textbf{As3}), the following asymptotic equivalences hold
true:%
\begin{align*}
\hat{\xi}_{M}^{\left(  1\right)  } &  \asymp\boldsymbol{\upsilon}_{M}%
^{T}\left(  \tilde{\delta}_{M}\mathbf{R}_{M}+\alpha_{M}\mathbf{I}_{M}\right)
^{-1}\boldsymbol{\upsilon}_{M}\text{,}\\
\hat{\xi}_{M}^{\left(  2\right)  } &  \asymp0\text{,}\\
\hat{\xi}_{M}^{\left(  3\right)  } &  \asymp\delta_{M}\boldsymbol{\upsilon
}_{N}^{T}\mathbf{T}_{N}\left(  \delta_{M}\mathbf{T}_{N}+\mathbf{I}_{N}\right)
^{-1}\boldsymbol{\upsilon}_{N}\text{,}\\
\hat{\xi}_{M}^{\left(  4\right)  } &  \asymp\frac{1}{1-\gamma_{M}\tilde
{\gamma}_{M}}\boldsymbol{\upsilon}_{M}^{T}\mathbf{R}_{M}^{1/2}\left(
\tilde{\delta}_{M}\mathbf{R}_{M}+\alpha_{M}\mathbf{I}_{M}\right)
^{-2}\mathbf{R}_{M}^{1/2}\boldsymbol{\upsilon}_{M}\text{,}\\
\hat{\xi}_{M}^{\left(  5\right)  } &  \asymp0\text{,}\\
\hat{\xi}_{M}^{\left(  6\right)  } &  \asymp\frac{\gamma_{M}}{1-\gamma
_{M}\tilde{\gamma}_{M}}\boldsymbol{\upsilon}_{N}^{T}\mathbf{T}_{N}\left(
\delta_{M}\mathbf{T}_{N}+\mathbf{I}_{N}\right)  ^{-2}\boldsymbol{\upsilon}%
_{N}\text{.}%
\end{align*}%
\end{theorem}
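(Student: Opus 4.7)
The plan is to reduce all six random quantities to bilinear forms in the resolvent $\mathbf{Q}_M(\alpha_M)=\mathbf{\hat{\Sigma}}_M^{-1}=(\tfrac{1}{N}\mathbf{\tilde{Y}}_N\mathbf{\tilde{Y}}_N^T+\alpha_M\mathbf{I}_M)^{-1}$, its dual $\tilde{\mathbf{Q}}_N(\alpha_M)=(\tfrac{1}{N}\mathbf{\tilde{Y}}_N^T\mathbf{\tilde{Y}}_N+\alpha_M\mathbf{I}_N)^{-1}$, and the ``squared resolvent'' $\mathbf{Q}_M\mathbf{R}_M\mathbf{Q}_M$. Once this reduction is done, the backbone of the argument is the first-order deterministic equivalent of [HKLNP08, Prop.~1] for doubly-correlated sample covariance matrices, applied in the bilinear-form sense. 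The identification of $\boldsymbol{\hat{\upsilon}}_M=\tfrac{1}{N}\mathbf{\tilde{Y}}_N\boldsymbol{\tilde{\upsilon}}_N$ as a linear functional of the Gaussian matrix $\mathbf{X}_N$ (with norm $O(1)$ a.s.) allows one to handle the terms containing $\boldsymbol{\hat{\upsilon}}_M$ either by (i) Sherman--Morrison column removal to decouple $\mathbf{Q}_M$ from the column carrying the stochastic dependence, or (ii) Gaussian integration by parts (Stein's lemma), coupled with moment bounds and Borel--Cantelli for the almost-sure conclusion.

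\textbf{The six quantities, one by one.} For $\hat{\xi}_M^{(1)}$, the result follows directly from the bilinear-form deterministic equivalent $\boldsymbol{u}^T\mathbf{Q}_M\boldsymbol{v}\asymp \boldsymbol{u}^T(\tilde{\delta}_M\mathbf{R}_M+\alpha_M\mathbf{I}_M)^{-1}\boldsymbol{v}$, valid for deterministic $\boldsymbol{u},\boldsymbol{v}$ of bounded norm under Assumptions (\textbf{As1})--(\textbf{As3}). For $\hat{\xi}_M^{(3)}$ I would exploit the algebraic identity $\tfrac{1}{N}\mathbf{\tilde{Y}}_N^T\mathbf{Q}_M\mathbf{\tilde{Y}}_N=\mathbf{I}_N-\alpha_M\tilde{\mathbf{Q}}_N$, which rewrites the quadratic form as $\|\boldsymbol{\upsilon}_N\|^2-\alpha_M\boldsymbol{\upsilon}_N^T\tilde{\mathbf{Q}}_N\boldsymbol{\upsilon}_N$; applying the dual-side deterministic equivalent $\tilde{\mathbf{Q}}_N\asymp(\delta_M\mathbf{T}_N+\alpha_M\mathbf{I}_N)^{-1}$ and simplifying produces exactly $\delta_M\,\boldsymbol{\upsilon}_N^T\mathbf{T}_N(\delta_M\mathbf{T}_N+\mathbf{I}_N)^{-1}\boldsymbol{\upsilon}_N$. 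The crossed term $\hat{\xi}_M^{(2)}=\tfrac{1}{\sqrt{N}}\boldsymbol{\upsilon}_M^T\mathbf{Q}_M\mathbf{R}_M^{1/2}\mathbf{X}_N\mathbf{T}_N^{1/2}\boldsymbol{\upsilon}_N$ has vanishing mean by oddness of the Gaussian, and its variance is $O(1/N)$ after isolating each column of $\mathbf{X}_N$ via Sherman--Morrison, yielding $\hat{\xi}_M^{(2)}\overset{a.s.}{\to}0$; the same argument applied to the resolvent-squared kernel gives $\hat{\xi}_M^{(5)}\asymp 0$.

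\textbf{The squared-resolvent terms and the main obstacle.} For $\hat{\xi}_M^{(4)}$ and $\hat{\xi}_M^{(6)}$ one needs a deterministic equivalent for $\boldsymbol{u}^T\mathbf{Q}_M\mathbf{R}_M\mathbf{Q}_M\boldsymbol{v}$. The naive guess $\boldsymbol{u}^T\bar{\mathbf{Q}}\mathbf{R}_M\bar{\mathbf{Q}}\boldsymbol{v}$ misses a multiplicative correction; this correction is precisely the factor $(1-\gamma_M\tilde{\gamma}_M)^{-1}$, which I would derive by implicitly differentiating the fixed-point system (\ref{systemEq}) with respect to a scalar perturbation (noting that $\partial_{\alpha}\mathbf{Q}_M=-\mathbf{Q}_M^2$ and $\partial_{\alpha}\bar{\mathbf{Q}}=-\bar{\mathbf{Q}}^2-(\partial_{\alpha}\tilde{\delta}_M)\bar{\mathbf{Q}}\mathbf{R}_M\bar{\mathbf{Q}}$, so that the linear system determining $\partial_{\alpha}\tilde{\delta}_M$ has coefficient $1-\gamma_M\tilde{\gamma}_M$). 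An equivalent route, preferable for rigor, is to apply Gaussian integration by parts directly to $\mathbb{E}[\boldsymbol{u}^T\mathbf{Q}_M\mathbf{R}_M\mathbf{Q}_M\boldsymbol{v}]$: the resulting fixed-point equation for the second-moment deterministic equivalent contains a contractive operator whose resolvent is $(1-\gamma_M\tilde{\gamma}_M)^{-1}$. Plugging in $\boldsymbol{\upsilon}_M$ yields $\hat{\xi}_M^{(4)}$, and the identity $\tfrac{1}{N^2}\mathbf{\tilde{Y}}_N^T\mathbf{Q}_M\mathbf{R}_M\mathbf{Q}_M\mathbf{\tilde{Y}}_N=\tfrac{1}{N}\tilde{\mathbf{Q}}_N\mathbf{T}_N(\cdots)\tilde{\mathbf{Q}}_N$ (obtained by applying the push-through identity twice and using $\tfrac{1}{N}\mathbf{\tilde{Y}}_N^T\mathbf{Q}_M=\tilde{\mathbf{Q}}_N\tfrac{1}{N}\mathbf{\tilde{Y}}_N^T$) then converts $\hat{\xi}_M^{(6)}$ into a quadratic form in $\tilde{\mathbf{Q}}_N$ that inherits the same $(1-\gamma_M\tilde{\gamma}_M)^{-1}$ correction, producing the stated expression. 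The main obstacle throughout is precisely the rigorous justification of this correction factor: upgrading from expectation-level fixed-point manipulations to almost-sure asymptotic equivalence requires careful concentration estimates (variance bounds plus Borel--Cantelli), uniform control of the smallest eigenvalue of $\mathbf{\hat{\Sigma}}_M$ away from $0$ via $\alpha_M>0$, and a technical argument ensuring $1-\gamma_M\tilde{\gamma}_M$ stays uniformly bounded away from $0$ under (\textbf{As1})--(\textbf{As3}).
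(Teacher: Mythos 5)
Your proposal follows essentially the same route as the paper: $\hat{\xi}_M^{(1)}$ and $\hat{\xi}_M^{(3)}$ via the bilinear-form deterministic equivalent and the push-through identity, $\hat{\xi}_M^{(2)}$ and $\hat{\xi}_M^{(5)}$ via rank-one decoupling plus moment bounds and Borel--Cantelli (the paper organizes the moment bounds through a martingale-difference decomposition and Burkholder's inequality), and $\hat{\xi}_M^{(4)}$, $\hat{\xi}_M^{(6)}$ by writing them as derivatives of resolvent functionals and implicitly differentiating the fixed-point system, which is exactly where the paper's factor $\left(1-\gamma_M\tilde{\gamma}_M\right)^{-1}$ comes from (justified there by Weierstrass/Vitali convergence of the derivatives). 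Two small imprecisions are worth fixing: the dual equivalent should read $\tilde{\mathbf{Q}}_N\asymp\alpha_M^{-1}\left(\mathbf{I}_N+\delta_M\mathbf{T}_N\right)^{-1}$ rather than $\left(\delta_M\mathbf{T}_N+\alpha_M\mathbf{I}_N\right)^{-1}$, since the latter would leave an $\alpha_M$ in the limit of $\hat{\xi}_M^{(3)}$ instead of the stated $\left(\delta_M\mathbf{T}_N+\mathbf{I}_N\right)^{-1}$; and a variance bound of order $1/N$ alone gives only convergence in probability for $\hat{\xi}_M^{(2)}$ and $\hat{\xi}_M^{(5)}$, so the almost-sure claim does require the higher-order moment bounds of order $N^{-1-\epsilon}$ that you allude to in your opening paragraph.
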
%
\begin{proof}%
See Appendix \ref{appTheoADE}.%
\end{proof}%

Using Theorem \ref{theoADE}, estimates of the out-of-sample performance of
optimal sample mean-variance portfolios based on the unconditional and
conditional models in Section \ref{secDataFor} are readily obtained. By means
of the previous asymptotic approximations in a practically more meaningful and
relevant double-limit regime (cf. Section \ref{secSim}), more accurate
information about the underestimation and overestimation effects of the
portfolio risk and return, respectively, can be provided.

The previous result is of interest on its own for characterization purposes as
well as for scenario analysis in investment management. However, particularly
for the calibration of unconditional models, one might well also be interested
in estimates of the previous quantities that are given in terms of the
available information, i.e., essentially, the data observation matrix. In the
proposed asymptotic regime, it follows from Theorem \ref{theoADE} that both
$\hat{\xi}_{M}^{\left(  2\right)  }$ and $\hat{\xi}_{M}^{\left(  5\right)  }$
are negligible and therefore can be discarded for analysis and decision
purposes. While $\hat{\xi}_{M}^{\left(  1\right)  }$ and $\hat{\xi}%
_{M}^{\left(  3\right)  }$ are already given in terms of only observable
data\footnote{This is not the case for $\boldsymbol{\upsilon}_{M}%
=\mathbf{\Sigma}_{0}^{-1/2}\boldsymbol{\mu}$, but still the consistent
estimation of $\hat{\xi}_{M}^{\left(  1\right)  }$ can be handled
straightforwarly by rearranging terms.}, terms $\hat{\xi}_{M}^{\left(
4\right)  }$ and $\hat{\xi}_{M}^{\left(  6\right)  }$ happen to be defined in
terms of the unknown $\mathbf{R}_{M}$. We next present a class of estimators
of (\ref{xi4}) and (\ref{xi6}), or equivalently their asymptotic deterministic
equivalents provided by Theorem \ref{theoADE}, which are strongly consistent
under the limiting regime in (\textbf{As3}).

\subsection{Consistent estimation of optimal large dimensional portfolios%
\label{secGCE}%
}

The parameters defining the estimators in (\ref{meanSHR}) and (\ref{covSHR}),
i.e., $\left\{  \mathbf{W}_{\boldsymbol{\mu},N},\mathbf{\mathbf{W}%
}_{\mathbf{\Sigma},N}\right\}  $ and $\left\{  \delta,\rho\right\}  $,
effectively represent a set of degrees-of-freedom with respect to which the
out-of-sample performance of a portfolio construction can be improved. For the
calibration of unconditional models by means of optimizing the estimator
parameterization, only the available sample data can be used in practice in
order to select the previous set of parameters. To that effect, from the
definition of the quantities (\ref{xi4}) and (\ref{xi6}) and the discussion
above, the estimation of $\hat{\xi}_{M}^{\left(  4\right)  }$ and $\hat{\xi
}_{M}^{\left(  6\right)  }$ are required. The naive approach is based on the
\textit{plug-in} or conventional estimator of $\hat{\xi}_{M}^{\left(
4\right)  }$, henceforth denoted with the subscript "cnv" by $\hat{\xi
}_{cnv,M}^{\left(  4\right)  }$, which is given by replacing the unknown
theoretical covariance matrix by the SCM, i.e.,%
\begin{equation}
\hat{\xi}_{cnv,M}^{\left(  4\right)  }=\boldsymbol{\upsilon}_{M}^{T}\left(
\tfrac{1}{N}\mathbf{\tilde{Y}}_{N}\mathbf{\tilde{Y}}_{N}^{T}+\alpha
_{M}\mathbf{I}_{M}\right)  ^{-1}\tfrac{1}{N}\mathbf{\tilde{Y}}_{N}%
\mathbf{\tilde{Y}}_{N}^{T}\left(  \tfrac{1}{N}\mathbf{\tilde{Y}}%
_{N}\mathbf{\tilde{Y}}_{N}^{T}+\alpha_{M}\mathbf{I}_{M}\right)  ^{-1}%
\boldsymbol{\upsilon}_{M}\text{.} \label{xi4con}%
\end{equation}
Additionally, let $\hat{\xi}_{cnv,M}^{\left(  6\right)  }$ denote the
"plug-in" estimator of $\hat{\xi}_{M}^{\left(  6\right)  }$, and notice that%

\begin{align}
\hat{\xi}_{cnv,M}^{\left(  6\right)  }  &  =\tfrac{1}{N}\boldsymbol{\upsilon
}_{N}^{T}\mathbf{\tilde{Y}}_{N}^{T}\left(  \tfrac{1}{N}\mathbf{\tilde{Y}}%
_{N}\mathbf{\tilde{Y}}_{N}^{T}+\alpha_{M}\mathbf{I}_{M}\right)  ^{-1}\tfrac
{1}{N}\mathbf{\tilde{Y}}_{N}\mathbf{\tilde{Y}}_{N}^{T}\left(  \tfrac{1}%
{N}\mathbf{\tilde{Y}}_{N}\mathbf{\tilde{Y}}_{N}^{T}+\alpha_{M}\mathbf{I}%
_{M}\right)  ^{-1}\mathbf{\tilde{Y}}_{N}\boldsymbol{\upsilon}_{N}\nonumber\\
&  =\boldsymbol{\upsilon}_{N}^{T}\left(  \tfrac{1}{N}\mathbf{\tilde{Y}}%
_{N}^{T}\mathbf{\tilde{Y}}_{N}\right)  ^{2}\left(  \tfrac{1}{N}\mathbf{\tilde
{Y}}_{N}^{T}\mathbf{\tilde{Y}}_{N}+\alpha_{M}\mathbf{I}_{N}\right)
^{-2}\boldsymbol{\upsilon}_{N}\text{.} \label{xi6con}%
\end{align}

Before presenting the main result of this section, we provide an intermediate
result that will be required for the statement of the improved estimators.%

\begin{proposition}%
\label{propDeltaGCE}%
Under Assumptions (\textbf{As1}) to (\textbf{As3}), a generalized consistent
estimator of $\delta_{M}$, denoted by $\hat{\delta}_{M}$, is given by the
unique positive solution to the following equation:%
\[
\frac{1}{N}\operatorname*{tr}\left[  \tfrac{1}{N}\mathbf{\tilde{Y}}%
_{N}\mathbf{\tilde{Y}}_{N}^{T}\left(  \tfrac{1}{N}\mathbf{\tilde{Y}}%
_{N}\mathbf{\tilde{Y}}_{N}^{T}+\alpha_{M}\mathbf{I}_{M}\right)  ^{-1}\right]
=\delta\frac{1}{N}\operatorname*{tr}\left[  \mathbf{T}_{N}\left(
\mathbf{I}_{N}+\delta\mathbf{T}_{N}\right)  ^{-1}\right]  \text{.}%
\]%
\end{proposition}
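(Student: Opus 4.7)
The plan is to combine the standard deterministic equivalent for the resolvent trace of the doubly-correlated sample covariance $\tfrac{1}{N}\mathbf{\tilde{Y}}_N\mathbf{\tilde{Y}}_N^T$ with the defining fixed-point equations (\ref{systemEq}) for $(\delta_M,\tilde{\delta}_M)$ in order to show that $\delta_M$ itself asymptotically solves the very equation defining $\hat{\delta}_M$, and then to turn the resulting equation-level convergence into convergence of the arguments via strict monotonicity.

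First I would introduce the auxiliary function $g_N(\delta)=\delta\cdot\frac{1}{N}\operatorname*{tr}[\mathbf{T}_N(\mathbf{I}_N+\delta\mathbf{T}_N)^{-1}]$ and rewrite it, using $\mathbf{A}(\mathbf{A}+\alpha\mathbf{I})^{-1}=\mathbf{I}-\alpha(\mathbf{A}+\alpha\mathbf{I})^{-1}$, as $g_N(\delta)=\frac{1}{N}\operatorname*{tr}[\mathbf{I}_N-(\mathbf{I}_N+\delta\mathbf{T}_N)^{-1}]$. This shows that $g_N$ is continuous and strictly increasing on $[0,\infty)$ with $g_N(0)=0$ and $\lim_{\delta\to\infty}g_N(\delta)=\operatorname*{tr}[\mathbf{T}_N]/N$ capped at $\operatorname{rank}(\mathbf{T}_N)/N$. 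Applying the same identity to the left-hand side of the estimator equation rewrites it as $\frac{M}{N}-\alpha_M\cdot\frac{1}{N}\operatorname*{tr}[(\tfrac{1}{N}\mathbf{\tilde Y}_N\mathbf{\tilde Y}_N^T+\alpha_M\mathbf{I}_M)^{-1}]$, which lies in $[0,M/N)$. This places the LHS in the range of $g_N$ almost surely, so that $\hat\delta_M$ is well defined as the unique positive solution.

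Second, I would invoke the Stieltjes-transform deterministic equivalent for doubly-correlated Wishart matrices (Proposition~1 of \cite{HKLNP08}), which under (\textbf{As1})--(\textbf{As3}) yields
\[
\frac{1}{N}\operatorname*{tr}\!\left[(\tfrac{1}{N}\mathbf{\tilde Y}_N\mathbf{\tilde Y}_N^T+\alpha_M\mathbf{I}_M)^{-1}\right] - \frac{1}{N}\operatorname*{tr}\!\left[(\tilde\delta_M\mathbf{R}_M+\alpha_M\mathbf{I}_M)^{-1}\right] \overset{a.s.}{\to} 0.
\]
Applying $\mathbf{A}(\mathbf{A}+\alpha\mathbf{I})^{-1}=\mathbf{I}-\alpha(\mathbf{A}+\alpha\mathbf{I})^{-1}$ on each side, then using the second equation of (\ref{systemEq}) to identify the deterministic side with $\tilde\delta_M\delta_M$, and finally the first equation of (\ref{systemEq}) to rewrite $\tilde\delta_M\delta_M=g_N(\delta_M)$, I obtain that the LHS of the estimator equation is asymptotically equivalent to $g_N(\delta_M)$. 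Since by construction it also equals $g_N(\hat\delta_M)$ exactly, this gives
\[
g_N(\hat\delta_M)-g_N(\delta_M)\overset{a.s.}{\to}0.
\]

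The last step, and the main obstacle, is to transfer this equation-level convergence into $|\hat\delta_M-\delta_M|\overset{a.s.}{\to}0$. Both sequences lie almost surely in a common compact interval $[0,C]$: inspection of the second equation of (\ref{systemEq}) gives $\delta_M\le \|\mathbf{R}_M\|_{\sup}\,c_M/\alpha_M$, while $\hat\delta_M$ is bounded by $g_N^{-1}(M/N)$, which remains finite under (\textbf{As1}). On such an interval $g_N'(\delta)=\frac{1}{N}\operatorname*{tr}[\mathbf{T}_N(\mathbf{I}_N+\delta\mathbf{T}_N)^{-2}]$ is bounded below away from zero as long as the empirical spectrum of $\mathbf{T}_N$ does not fully concentrate at the origin; in that regime the inverse-Lipschitz inequality $|\hat\delta_M-\delta_M|\le |g_N(\hat\delta_M)-g_N(\delta_M)|/\inf_{\delta\in[0,C]}g_N'(\delta)$ closes the argument. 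The delicate point is handling possible degeneracy at the boundary, for which a Bolzano--Weierstrass subsequence argument combined with strict monotonicity of $g_N$ provides a fallback, selecting along any convergent subsequence of $\hat\delta_M-\delta_M$ and using strict monotonicity to force the common limit to be zero.
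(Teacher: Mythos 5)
Your proof is correct and is essentially the paper's own argument: the paper's one-line proof applies (\ref{propAux1a}) of Proposition \ref{propAux} with $\mathbf{\Theta}_{M}=\frac{1}{N}\mathbf{I}_{M}$ at $z=-\alpha_{M}$, which amounts precisely to your combination of the resolvent deterministic equivalent with the identity $\mathbf{A}\left(  \mathbf{A}+\alpha\mathbf{I}\right)  ^{-1}=\mathbf{I}-\alpha\left(  \mathbf{A}+\alpha\mathbf{I}\right)  ^{-1}$ and the system (\ref{systemEq}) to identify the left-hand side asymptotically with $\tilde{\delta}_{M}\delta_{M}=g_{N}\left(  \delta_{M}\right)  $, while your well-posedness discussion and the monotonicity/inverse-Lipschitz transfer from $g_{N}(\hat{\delta}_{M})-g_{N}(\delta_{M})\rightarrow0$ to $\hat{\delta}_{M}-\delta_{M}\rightarrow0$ make explicit steps the paper leaves implicit. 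Two cosmetic corrections: $\lim_{\delta\rightarrow\infty}g_{N}\left(  \delta\right)  $ equals $\operatorname*{rank}\left(  \mathbf{T}_{N}\right)  /N$ exactly (not ``$\operatorname*{tr}\left[  \mathbf{T}_{N}\right]  /N$ capped at\ldots''), and the solvability bound you need on the left-hand side is $\operatorname*{rank}\left(  \tfrac{1}{N}\mathbf{\tilde{Y}}_{N}\mathbf{\tilde{Y}}_{N}^{T}\right)  /N\leq\operatorname*{rank}\left(  \mathbf{T}_{N}\right)  /N$, not merely $M/N$.
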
%
\begin{proof}%
The proof follows from the convergence result (\ref{propAux1a})\ in
Proposition \ref{propAux}, for $\mathbf{\Theta}_{M}=\frac{1}{N}\mathbf{I}_{M}$
and $z=-1$.%
\end{proof}%

The following theorem provides estimators of $\hat{\xi}_{M}^{\left(  4\right)
}$ and $\hat{\xi}_{M}^{\left(  6\right)  }$, which are consistent in the
double-limit regime in Assumption (\textbf{As3}).%

\begin{theorem}%
\label{theoGCE}%
(\textit{Generalized Consistent Estimators}) Under Assumptions (\textbf{As1})
to (\textbf{As3}), we have the following consistent estimators for $\hat{\xi
}_{M}^{\left(  4\right)  }$ and $\hat{\xi}_{M}^{\left(  6\right)  }$:%
\begin{align}
\hat{\xi}_{gce,M}^{\left(  4\right)  }  &  =a_{M}\hat{\xi}_{cnv,M}^{\left(
4\right)  }\text{,}\label{theoGCEa}\\
\hat{\xi}_{gce,M}^{\left(  6\right)  }  &  =a_{M}\hat{\xi}_{cnv,M}^{\left(
6\right)  }+b_{M}\text{,} \label{theoGCEb}%
\end{align}
where $\hat{\xi}_{cnv,M}^{\left(  4\right)  }$ and $\hat{\xi}_{cnv,M}^{\left(
6\right)  }$ are defined as in (\ref{xi4con}) and (\ref{xi6con}),
respectively, and%
\begin{align*}
a_{M}  &  =\frac{1}{\frac{1}{N}\operatorname*{tr}\left[  \mathbf{T}_{N}\left(
\mathbf{I}_{N}+\hat{\delta}_{M}\mathbf{T}_{N}\right)  ^{-2}\right]  }%
\text{,}\\
b_{M}  &  =-\frac{\hat{\delta}_{M}^{2}\boldsymbol{\upsilon}_{M}^{T}%
\mathbf{T}_{N}^{2}\left(  \mathbf{I}_{N}+\hat{\delta}_{M}\mathbf{T}%
_{N}\right)  ^{-2}\boldsymbol{\upsilon}_{M}}{\frac{1}{N}\operatorname*{tr}%
\left[  \mathbf{T}_{N}\left(  \mathbf{I}_{N}+\hat{\delta}_{M}\mathbf{T}%
_{N}\right)  ^{-2}\right]  }\text{,}%
\end{align*}
with $\hat{\delta}_{M}$ being given by Proposition \ref{propDeltaGCE}.%
\end{theorem}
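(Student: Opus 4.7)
The plan is to establish the two claimed almost-sure equivalences $\hat{\xi}_{gce,M}^{(4)}\asymp\hat{\xi}_M^{(4)}$ and $\hat{\xi}_{gce,M}^{(6)}\asymp\hat{\xi}_M^{(6)}$ in two stages. First I would find the asymptotic deterministic equivalents of the plug-in quantities $\hat{\xi}_{cnv,M}^{(4)}$ and $\hat{\xi}_{cnv,M}^{(6)}$; second I would identify $a_M$ and $b_M$ as the algebraic correction rescaling these naive equivalents into the true equivalents from Theorem \ref{theoADE}. Consistency is then promoted from deterministic to observable by invoking Proposition \ref{propDeltaGCE} together with a continuous-mapping argument.

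For the first stage the key algebraic simplification is to use the resolvent identity $\mathbf{S}_M\hat{\mathbf{\Sigma}}_M^{-1}=\mathbf{I}_M-\alpha_M\hat{\mathbf{\Sigma}}_M^{-1}$ (with $\mathbf{S}_M=\tfrac{1}{N}\tilde{\mathbf{Y}}_N\tilde{\mathbf{Y}}_N^T$), giving $\hat{\xi}_{cnv,M}^{(4)}=\boldsymbol{\upsilon}_M^T\hat{\mathbf{\Sigma}}_M^{-1}\boldsymbol{\upsilon}_M-\alpha_M\boldsymbol{\upsilon}_M^T\hat{\mathbf{\Sigma}}_M^{-2}\boldsymbol{\upsilon}_M$, and analogously for $\hat{\xi}_{cnv,M}^{(6)}$ on the dual $N\times N$ side using $\tfrac{1}{N}\tilde{\mathbf{Y}}_N^T\tilde{\mathbf{Y}}_N(\tfrac{1}{N}\tilde{\mathbf{Y}}_N^T\tilde{\mathbf{Y}}_N+\alpha_M\mathbf{I}_N)^{-1}=\mathbf{I}_N-\alpha_M(\tfrac{1}{N}\tilde{\mathbf{Y}}_N^T\tilde{\mathbf{Y}}_N+\alpha_M\mathbf{I}_N)^{-1}$. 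This reduces every random object to resolvent and squared-resolvent bilinear forms, and the squared-resolvent forms are obtained by differentiating the first-order equivalents with respect to the regularization parameter $\alpha_M$ (since $-\partial_\alpha\hat{\mathbf{\Sigma}}_M^{-1}=\hat{\mathbf{\Sigma}}_M^{-2}$). Implicit differentiation of the defining system (\ref{systemEq}) yields $\tilde{\delta}_M'=-\delta_M'\tilde{\gamma}_M$ and a linear equation for $\delta_M'$ that can be solved explicitly and produces the prefactor $(1-\gamma_M\tilde{\gamma}_M)^{-1}$, matching the one already present in Theorem \ref{theoADE}. This is the main technical step and also the main obstacle: the factor $(1-\gamma_M\tilde{\gamma}_M)^{-1}$ must emerge correctly from the implicit differentiation so that the naive equivalents differ from the targets only by the clean scalars $a_M^{-1}$ and by an additive term coming from $\|\boldsymbol{\upsilon}_N\|^2$-type residuals in the $N$-dimensional case.

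For the second stage, after this computation, the deterministic equivalent of $\hat{\xi}_{cnv,M}^{(4)}$ becomes proportional to $\boldsymbol{\upsilon}_M^T\mathbf{R}_M^{1/2}(\tilde{\delta}_M\mathbf{R}_M+\alpha_M\mathbf{I}_M)^{-2}\mathbf{R}_M^{1/2}\boldsymbol{\upsilon}_M$, the proportionality being exactly $\tfrac{1}{N}\operatorname{tr}[\mathbf{T}_N(\mathbf{I}_N+\delta_M\mathbf{T}_N)^{-2}]\cdot(1-\gamma_M\tilde{\gamma}_M)^{-1}$; compared with Theorem \ref{theoADE} the mismatch is precisely the trace factor, and multiplication by $a_M$ absorbs it. The equivalent for $\hat{\xi}_{cnv,M}^{(6)}$ has the same multiplicative mismatch together with a spurious constant coming from the $\|\boldsymbol{\upsilon}_N\|^2$ term that appears when expanding $(\mathbf{I}_N-\alpha_M(\cdot)^{-1})^2$; this constant is exactly what $b_M$ subtracts, and the explicit form in the theorem follows by bookkeeping of the $\mathbf{T}_N$ traces produced by the dual application of Proposition \ref{propAux}.

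Finally, since $a_M$ and $b_M$ as defined in the theorem are continuous functions of $\hat{\delta}_M$ and of bounded deterministic matrices (by (\textbf{As1})), Proposition \ref{propDeltaGCE} combined with the fixed-point relations in (\ref{systemEq}) turns the multiplicative/additive factors expressed in terms of $\delta_M$, $\tilde\delta_M$, $\gamma_M$, $\tilde\gamma_M$ into observable consistent approximations. A Slutsky-type argument then concludes $a_M\hat{\xi}_{cnv,M}^{(4)}\asymp\hat{\xi}_M^{(4)}$ and $a_M\hat{\xi}_{cnv,M}^{(6)}+b_M\asymp\hat{\xi}_M^{(6)}$, proving the theorem.
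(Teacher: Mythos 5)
Your proposal follows essentially the same route as the paper: reduce $\hat{\xi}_{cnv,M}^{(4)}$ and $\hat{\xi}_{cnv,M}^{(6)}$ to first- and second-order resolvent bilinear forms via the identity $\mathbf{S}_M\hat{\mathbf{\Sigma}}_M^{-1}=\mathbf{I}_M-\alpha_M\hat{\mathbf{\Sigma}}_M^{-1}$, obtain the squared-resolvent equivalents by differentiating the fixed-point system (this is exactly what Proposition \ref{propAux} and Lemma \ref{lemmaAuxRel} package, producing the $(1-\gamma_M\tilde{\gamma}_M)^{-1}$ prefactor), and then read off $a_M$ as the trace-factor mismatch and $b_M$ as the spurious additive term, with $\hat{\delta}_M$ from Proposition \ref{propDeltaGCE} making everything observable. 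The only cosmetic difference is that you fully expand $\bigl(\mathbf{I}_N-\alpha_M(\cdot)^{-1}\bigr)^{2}$ on the dual side, whereas the paper keeps one factor intact so that the spurious term appears directly as $\delta_M^{2}\boldsymbol{\upsilon}_N^{T}\mathbf{T}_N^{2}(\mathbf{I}_N+\delta_M\mathbf{T}_N)^{-2}\boldsymbol{\upsilon}_N$; both computations coincide.
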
%
\begin{proof}%
See Appendix \ref{appTheoGCE}.%
\end{proof}%
%

\begin{remark}%
The asymptotic equivalents and consistent estimators of $\hat{\xi}%
_{M}^{\left(  1\right)  }$ and $\hat{\xi}_{M}^{\left(  4\right)  }$ in Theorem
and Theorem, respectively, generalize previous results on the characterization
of quadratic forms depending on the eigenvalues and eigenvectors of the sample
covariance matrix (see \cite[Proposition 1]{ML06},\cite[Chapter 4]{ML05B} and
\cite[Theorem 1]{RM09}).%
\end{remark}%
%

\begin{remark}%
We notice that Theorem \ref{theoADE}, Proposition \ref{propDeltaGCE} and
Theorem \ref{theoGCE} hold verbatim if the vectors $\boldsymbol{\upsilon}_{M}%
$, $\boldsymbol{\upsilon}_{N}$ and the matrices $\mathbf{X}_{N}$,
$\mathbf{R}_{M}$, $\mathbf{\Sigma}_{M}$, $\mathbf{T}_{N}$ have complex-valued
entries.%
\end{remark}%

\section{Numerical validations%
\label{secSim}%
}

In this section, we provide the results of some simulations illustrating the
power of the proposed analytical framework. In particular, we consider the
construction of a GMVP based on synthetic data modeling a universe of $M=50$
assets (e.g., Euro Stoxx 50) with annualized volatility (standard deviation)
between 20\% and 30\%. For simple illustration purposes, we have assumed that
the expected return is negligible compared to the asset covariance matrix, and
so it has not been estimated. We run simulations considering estimation
windows ranging from $20$ to $200$ return observations. Specifically, we
measure the accuracy of approximating the out-of-sample (realized) variance of
a GMVP by its asymptotic deterministic equivalent (ADE) given in terms of the
investment scenario parameters (cf. Section \ref{secADE}), the
\textit{conventional} (CNV) implementation based on the naive replacement of
the unknown parameters by their sample counterparts, and its
\textit{generalized consistent estimator} (GCE) derived in Section
\ref{secGCE}. Monte Carlo simulations ($10^{3}$ iterations) are run for three
different scenarios, for which the approximation error in relative terms and
in percentage is provided, i.e., $100\times\left\vert \sigma_{P}\left(
\mathbf{\hat{w}}_{\mathsf{GMVP}}\right)  -\hat{\sigma}_{P}\left(
\mathbf{\hat{w}}_{\mathsf{GMVP}}\right)  \right\vert /\sigma_{P}\left(
\mathbf{\hat{w}}_{\mathsf{GMVP}}\right)  $, where $\hat{\sigma}_{P}\left(
\mathbf{\hat{w}}_{\mathsf{GMVP}}\right)  $ denotes here any of the three
approximations. Moreover, in all cases we have considered a covariance matrix
shrinkage estimator with $\mathbf{\Sigma}_{0}=\mathbf{I}_{M}$, and parameters
$\rho$ and $\mathbf{W}_{\mathbf{\Sigma},N}$ to be calibrated for optimal
performance. In the first experiment, we consider fixed values of the
calibrating parameters given by the coefficient $\rho=0.05$ and a diagonal
matrix $\mathbf{T}=\mathbf{W}_{\mathbf{\Sigma},N}$ given by half of its
entries being equal to $t=0.75$ and the other half equal to $2-t$. Figure
\ref{figure1} shows the relative approximation error for each method.%
\begin{figure}[ptb]%
\centering
\includegraphics[
height=3.8095in,
width=5.054in
]%
{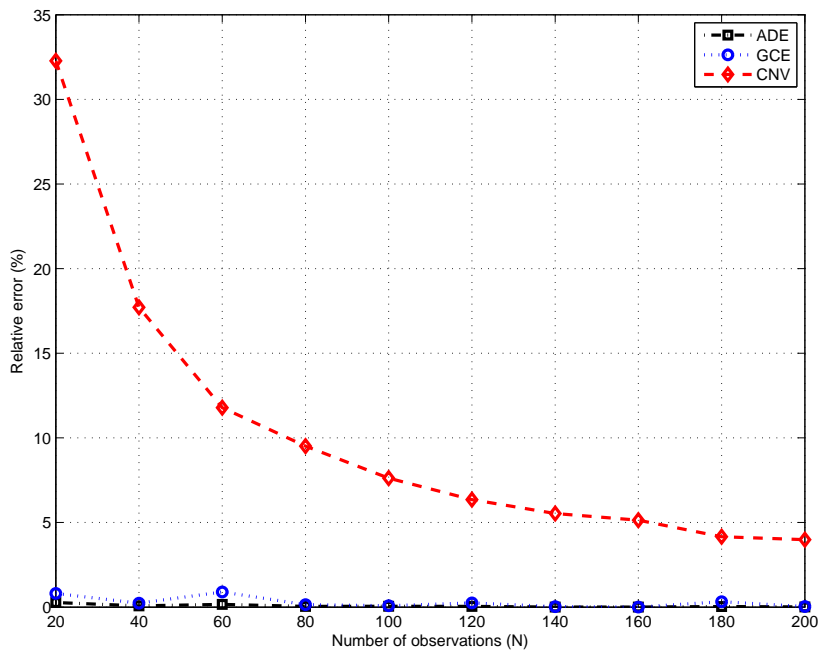}%
\caption{Approximation of realized out-of-sample variance of a GMVP for fixed
calibrating parameters}%
\label{figure1}%
\end{figure}
In the two other experiments, we consider the construction of GMVPs given by
the calibration of the optimal (for minimum variance) parameters $\rho$ or
$\mathbf{W}_{\mathbf{\Sigma},N}$, respectively, where in each case the other
parameter has been fixed to its value in the first experiment. Figures
\ref{figure2} and \ref{figure3} show the results for the calibration of $\rho$
and $\mathbf{W}_{\mathbf{\Sigma},N}$, respectively. In our simulations, we
applied a naive optimization scheme to find the optimal parameters in these
simple illustrative examples, as we do not pursue dealing with practical
optimization issues in this work, but rather focus on a representative
validation of the statistical results that we have derived; efficient
optimization algorithms based, e.g., on successive convex approximation (see
\cite{CTPOJ07}), are left as future work and are now under investigation by
the authors.%
\begin{figure}[ptb]%
\centering
\includegraphics[
height=3.8095in,
width=5.054in
]%
{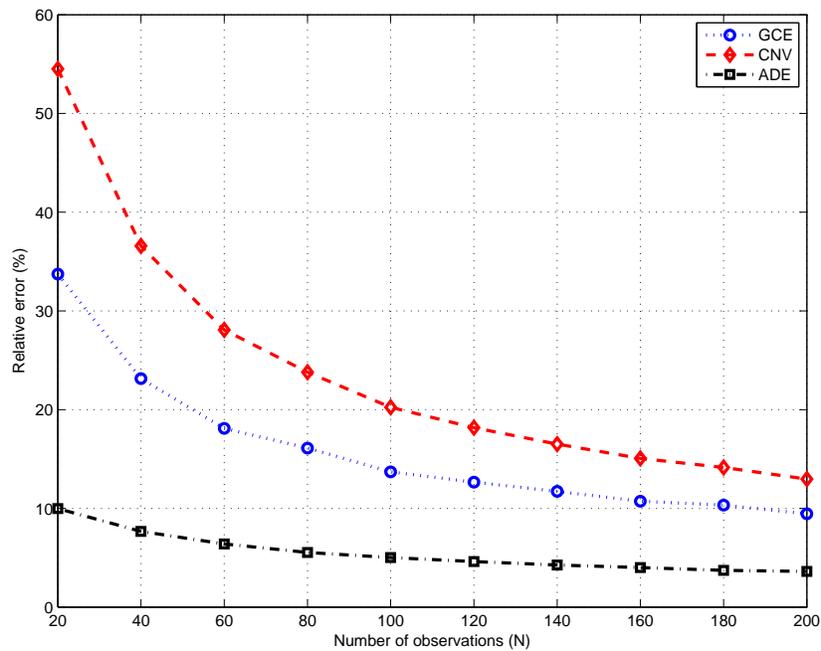}%
\caption{Approximation of realized out-of-sample variance of a GMVP for fixed
$\rho$ and optimized $\mathbf{W}_{\mathbf{\Sigma},N}$}%
\label{figure2}%
\end{figure}
\begin{figure}[ptb]%
\centering
\includegraphics[
height=3.8095in,
width=5.054in
]%
{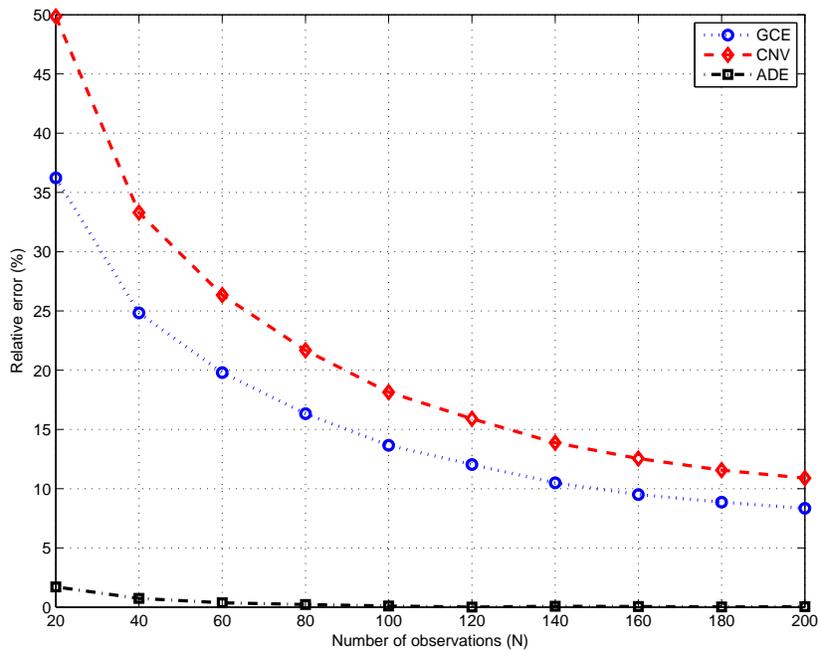}%
\caption{Approximation of realized out-of-sample variance of a GMVP for fixed
$\mathbf{W}_{\mathbf{\Sigma},N}$ and optimized $\rho$}%
\label{figure3}%
\end{figure}
From the simulation outputs, it is clear that the performance of the proposed
consistent estimators is decreased whenever calibration of the parameters has
to be performed, essentially due to the variability (fluctuations) of the
estimators. An extensive simulation campaign is outside the scope of the
section and the paper, but a reduction of this effect can be observed as
expected by increasing for instance the number of assets in the universe
(e.g., in the same illustrative line, $M=300$ for the index Euro Stoxx 300).
The use of information about the fluctuations of the estimators in order to
improve the performance of the method is currently under investigation.

\section{Conclusions%
\label{secCon}%
}

In this paper, we have provided a asymptotic framework for the analysis of the
consistency of arbitrarily large sample mean-variance portfolios that are
constructed on the basis of improved Bayesian or shrinkage estimation and
weighted sampling. To that effect, we have resorted to recent contributions on
the theory of the spectral analysis of large random matrices, based on a
double-limit regime that is defined by both the number of samples and the
number of portfolio constituents going to infinity at the same rate. In spite
of its asymptotic nature, by keeping both the return observation size and
dimension to be of the same order of magnitude our results have proved to
successfully describe the performance of sample portfolios under realistic,
finite-size situations of interest. Furthermore, based on the previous
characterization of the estimation risk, corrections of the level of risk
underestimation and return overestimation of a specific portfolio
constructions have been proposed so as to optimize the out-of-sample
performance. Our proposed calibration rules represent a sensible portfolio
choice improving on standard, usually overly optimistic Sharpe-based
investment decisions.%

\appendices

\section{Technical preliminaries%
\label{appTech}%
}

\subsection{Further definitions and auxiliary relations%
\label{appTechAux}%
}

We first recall the Sherman--Morrison--Woodbury formula, or matrix inversion
lemma, which will be used repeatedly in the sequel, i.e.,%
\begin{equation}
\left(  \mathbf{U\Xi V}+\mathbf{\Lambda}\right)  ^{-1}=\mathbf{\Lambda}%
^{-1}-\mathbf{\Lambda}^{-1}\mathbf{U}\left(  \mathbf{\Xi}^{-1}%
+\mathbf{V\Lambda}^{-1}\mathbf{U}\right)  ^{-1}\mathbf{V\Lambda}^{-1}\text{.}
\label{MIL}%
\end{equation}
In particular, the next identity for rank-augmenting matrices follows from
(\ref{MIL}):%
\begin{equation}
\left(  \mathbf{A}+\mathbf{uv}^{T}\right)  ^{-1}=\mathbf{A}^{-1}%
-\frac{\mathbf{A}^{-1}\mathbf{uv}^{T}\mathbf{A}^{-1}}{1+\mathbf{v}%
^{T}\mathbf{A}^{-1}\mathbf{u}}\text{.} \label{MILinc}%
\end{equation}
Let $\mathbf{Q}=\left(  \frac{1}{N}\mathbf{\tilde{Y}}_{N}\mathbf{\tilde{Y}%
}_{N}^{T}-z\mathbf{I}_{M}\right)  ^{-1}$, and also $\mathbf{Q}_{\left(
n\right)  }=\left(  \frac{1}{N}\mathbf{\tilde{Y}}_{\left(  n\right)
}\mathbf{\tilde{Y}}_{\left(  n\right)  }^{T}-z\mathbf{I}_{M}\right)  ^{-1}$,
where $\mathbf{\tilde{Y}}_{\left(  n\right)  }\in\mathbb{C}^{M\times N-1}$ is
defined by extracting the $n$th column from the data matrix $\mathbf{\tilde
{Y}}$. In particular, using (\ref{MILinc}), we get%
\begin{equation}
\mathbf{Q}=\mathbf{Q}_{\left(  n\right)  }-\frac{\frac{1}{N}\mathbf{Q}%
_{\left(  n\right)  }\mathbf{\tilde{y}}_{n}\mathbf{\tilde{y}}_{n}%
^{T}\mathbf{Q}_{\left(  n\right)  }}{1+\frac{1}{N}\mathbf{\tilde{y}}_{n}%
^{T}\mathbf{Q}_{\left(  n\right)  }\mathbf{\tilde{y}}_{n}}\text{.}
\label{MIL-r1u}%
\end{equation}

Using the definitions in Section \ref{secADE}, we observe that%
\begin{align}
\delta_{M}-\tilde{\delta}_{M}\gamma_{M}  &  =\alpha_{M}\frac{1}{N}%
\operatorname*{tr}\left[  \mathbf{R}_{M}\left(  \tilde{\delta}_{M}%
\mathbf{R}_{M}+\alpha_{M}\mathbf{I}_{M}\right)  ^{-2}\right] \label{ie1}\\
\tilde{\delta}_{M}-\delta_{M}\tilde{\gamma}_{M}  &  =\frac{1}{N}%
\operatorname*{tr}\left[  \mathbf{T}_{N}\left(  \mathbf{I}_{N}+\delta
_{M}\mathbf{T}_{N}\right)  ^{-2}\right]  \text{.} \label{ie2}%
\end{align}
Additionally, the following definitions will be useful for our derivations:%
\begin{align*}
\zeta_{M}  &  =\frac{1}{1-\gamma_{M}\tilde{\gamma}_{M}}\frac{1}{N}%
\operatorname*{tr}\left[  \mathbf{R}_{M}\left(  \tilde{\delta}_{M}%
\mathbf{R}_{M}+\alpha_{M}\mathbf{I}_{M}\right)  ^{-2}\right]  \text{,}\\
\tilde{\zeta}_{M}  &  =-\frac{\tilde{\gamma}_{M}}{1-\gamma_{M}\tilde{\gamma
}_{M}}\frac{1}{N}\operatorname*{tr}\left[  \mathbf{R}_{M}\left(  \tilde
{\delta}_{M}\mathbf{R}_{M}+\alpha_{M}\mathbf{I}_{M}\right)  ^{-2}\right]
\text{.}%
\end{align*}
In particular, notice that%
\begin{equation}
\tilde{\zeta}_{M}=-\tilde{\gamma}_{M}\zeta_{M}\text{.} \label{diffe}%
\end{equation}
%

\begin{lemma}%
\label{lemmaAuxRel}%
The following relations hold true:%
\begin{align}
\tilde{\delta}_{M}+\alpha_{M}\tilde{\zeta}_{M}  &  =\frac{1}{1-\gamma
_{M}\tilde{\gamma}_{M}}\frac{1}{N}\operatorname*{tr}\left[  \mathbf{T}%
_{N}\left(  \mathbf{I}_{N}+\delta_{M}\mathbf{T}_{N}\right)  ^{-2}\right]
\label{derDeltatilde}\\
\delta_{M}-\alpha_{M}\zeta_{M}  &  =\frac{\gamma_{M}}{1-\gamma_{M}%
\tilde{\gamma}_{M}}\frac{1}{N}\operatorname*{tr}\left[  \mathbf{T}_{N}\left(
\mathbf{I}_{N}+\delta_{M}\mathbf{T}_{N}\right)  ^{-2}\right]  \text{.}
\label{derDelta}%
\end{align}%
\end{lemma}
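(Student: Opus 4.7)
The proof is purely algebraic, relying only on the three scalar identities stated just before the lemma, namely (\ref{ie1}), (\ref{ie2}), and (\ref{diffe}), together with the definitions of $\zeta_{M}$ and $\tilde{\zeta}_{M}$. No random matrix machinery is needed: the fixed-point relations (\ref{systemEq}) enter only implicitly through (\ref{ie1})--(\ref{ie2}).

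The plan is to reduce both identities to a common intermediate form in which the trace $\tfrac{1}{N}\operatorname{tr}[\mathbf{R}_{M}(\tilde{\delta}_{M}\mathbf{R}_{M}+\alpha_{M}\mathbf{I}_{M})^{-2}]$ has been traded for $\delta_{M}-\tilde{\delta}_{M}\gamma_{M}$ via (\ref{ie1}), so that the right-hand side collapses to the trace in (\ref{ie2}) divided by $1-\gamma_{M}\tilde{\gamma}_{M}$. Concretely, for the first relation I would start from the definition of $\tilde{\zeta}_{M}$, invoke (\ref{diffe}) to write $\alpha_{M}\tilde{\zeta}_{M}=-\alpha_{M}\tilde{\gamma}_{M}\zeta_{M}$, and then plug in the definition of $\zeta_{M}$. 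Applying (\ref{ie1}) inside the resulting expression removes the factor $\alpha_{M}\tfrac{1}{N}\operatorname{tr}[\mathbf{R}_{M}(\tilde{\delta}_{M}\mathbf{R}_{M}+\alpha_{M}\mathbf{I}_{M})^{-2}]$ in favour of $\delta_{M}-\tilde{\delta}_{M}\gamma_{M}$. Adding $\tilde{\delta}_{M}$ and placing everything over the common denominator $1-\gamma_{M}\tilde{\gamma}_{M}$, the cross-term $\tilde{\delta}_{M}\gamma_{M}\tilde{\gamma}_{M}$ cancels, leaving $(\tilde{\delta}_{M}-\delta_{M}\tilde{\gamma}_{M})/(1-\gamma_{M}\tilde{\gamma}_{M})$. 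Identity (\ref{ie2}) then finishes (\ref{derDeltatilde}).

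The second relation (\ref{derDelta}) is handled by the same recipe but one step shorter. I would expand $\delta_{M}-\alpha_{M}\zeta_{M}$ using the definition of $\zeta_{M}$ directly (no use of (\ref{diffe}) is needed here), apply (\ref{ie1}) to rewrite $\alpha_{M}\tfrac{1}{N}\operatorname{tr}[\mathbf{R}_{M}(\tilde{\delta}_{M}\mathbf{R}_{M}+\alpha_{M}\mathbf{I}_{M})^{-2}]=\delta_{M}-\tilde{\delta}_{M}\gamma_{M}$, put everything over the denominator $1-\gamma_{M}\tilde{\gamma}_{M}$, and observe that the $\delta_{M}$ terms outside the cross-product cancel. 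The remaining numerator factors as $\gamma_{M}(\tilde{\delta}_{M}-\delta_{M}\tilde{\gamma}_{M})$, and (\ref{ie2}) again converts this into the announced trace.

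There is no real obstacle: each step is a one-line algebraic rearrangement, and the only care needed is to keep track of signs when using $\tilde{\zeta}_{M}=-\tilde{\gamma}_{M}\zeta_{M}$ and to verify that the two cross-terms $\tilde{\delta}_{M}\gamma_{M}\tilde{\gamma}_{M}$ (respectively $\delta_{M}\gamma_{M}\tilde{\gamma}_{M}$) indeed cancel, which they do because of the specific coefficients provided by the definitions of $\zeta_{M}$ and $\tilde{\zeta}_{M}$. In effect, the lemma is a bookkeeping identity showing that the two ``derivative-like'' quantities $\tilde{\delta}_{M}+\alpha_{M}\tilde{\zeta}_{M}$ and $\delta_{M}-\alpha_{M}\zeta_{M}$ are precisely the two scalar prefactors that will appear in the asymptotic deterministic equivalents of $\hat{\xi}_{M}^{(4)}$ and $\hat{\xi}_{M}^{(6)}$ in Theorem \ref{theoADE}.
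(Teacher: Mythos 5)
Your proof is correct and follows essentially the same route as the paper's: both arguments are pure algebra combining the definitions of $\zeta_{M},\tilde{\zeta}_{M}$ with the identities (\ref{ie1}), (\ref{ie2}) and (\ref{diffe}), and your cancellations of the cross-terms check out. The only (immaterial) organizational difference is that the paper first establishes $\delta_{M}-\alpha_{M}\zeta_{M}=\gamma_{M}\left(\tilde{\delta}_{M}+\alpha_{M}\tilde{\zeta}_{M}\right)$ and then evaluates one of the two quantities, whereas you derive each identity directly.
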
%
\begin{proof}%
We first show that $\delta_{M}-\alpha_{M}\zeta_{M}=\gamma_{M}\left(
\tilde{\delta}_{M}+\alpha_{M}\tilde{\zeta}_{M}\right)  $, and then prove that
$\tilde{\delta}_{M}+\alpha_{M}\tilde{\zeta}_{M}=\left(  1-\gamma_{M}%
\tilde{\gamma}_{M}\right)  ^{-1}\left(  \tilde{\delta}_{M}-\delta_{M}%
\tilde{\gamma}_{M}\right)  $, so that the result follows finally by using
(\ref{ie2}). Let us handle the first equality. Using the definitions above, by
simple partial fraction decomposition, we get%
\[
\delta_{M}-\alpha_{M}\zeta_{M}=\gamma_{M}\left(  \tilde{\delta}_{M}-\alpha
_{M}\tilde{\gamma}_{M}\zeta_{M}\right)  \text{,}%
\]
and the first equality follows by using (\ref{diffe}). Regarding the second
equality, using the definition of $\zeta_{M}$ along with (\ref{ie1}) we notice
that%
\[
\alpha_{M}\tilde{\zeta}_{M}=\frac{1}{1-\gamma_{M}\tilde{\gamma}_{M}}\left(
\tilde{\delta}_{M}\gamma_{M}-\delta_{M}\right)  \text{,}%
\]
and the equality follows by introducing the previous expression in
$\tilde{\delta}_{M}+\alpha_{M}\tilde{\zeta}_{M}$ and finally rearranging
terms.%
\end{proof}%

\subsection{Some useful stochastic convergence results}

The following two results will be useful to prove the vanishing characteristic
of both $\hat{\xi}_{M}^{\left(  2\right)  }$ and $\hat{\xi}_{M}^{\left(
5\right)  }$.%

\begin{lemma}%
\label{lemma_BIs}
(Burkholder's inequality) Let $\left\{  \mathcal{F}_{l}\right\}  $ be a given
filtration and $\left\{  X_{l}\right\}  $ a martingale difference sequence
with respect to $\left\{  \mathcal{F}_{l}\right\}  $. Then, for any
$p\in\left(  1,\infty\right)  $, there exist constants $K_{1}$ and $K_{2}$
depending only on $p$ such that \cite[Theorem 9]{B66}%
\[
K_{1}\mathbb{E}\left[  \left(  \sum_{l=1}^{L}\left\vert X_{l}\right\vert
^{2}\right)  ^{p/2}\right]  \leq\mathbb{E}\left[  \left\vert \sum_{l=1}%
^{L}X_{l}\right\vert ^{p}\right]  \leq K_{2}\mathbb{E}\left[  \left(
\sum_{l=1}^{L}\left\vert X_{l}\right\vert ^{2}\right)  ^{p/2}\right]  \text{.}%
\]%
\end{lemma}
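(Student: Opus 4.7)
The plan is to separate the argument into three regimes: $p=2$, $p>2$, and $1<p<2$. Set $M_{L}=\sum_{l=1}^{L}X_{l}$, $S_{L}=(\sum_{l=1}^{L}|X_{l}|^{2})^{1/2}$ and $M_{L}^{*}=\max_{l\leq L}|M_{l}|$. The case $p=2$ is immediate: conditional on $\mathcal{F}_{l-1}$, the cross terms $\mathbb{E}[X_{l}X_{j}\mid\mathcal{F}_{l-1}]=0$ for $j<l$, so orthogonality of the martingale differences yields $\mathbb{E}[M_{L}^{2}]=\mathbb{E}[S_{L}^{2}]$ and the inequalities hold with $K_{1}=K_{2}=1$.

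For $p>2$, I would combine Doob's maximal inequality with a discrete It\^{o}-type expansion. Writing $M_{l}^{2}-M_{l-1}^{2}=2M_{l-1}X_{l}+X_{l}^{2}$ and using the martingale property to kill the cross term in expectation leads, after some manipulation, to an estimate of the form $\mathbb{E}[|M_{L}|^{p}]\leq C_{p}\mathbb{E}[(M_{L}^{*})^{p-2}S_{L}^{2}]$; an application of H\"{o}lder's inequality with conjugate exponents $p/(p-2)$ and $p/2$, followed by Doob's inequality $\mathbb{E}[(M_{L}^{*})^{p}]\leq(p/(p-1))^{p}\mathbb{E}[|M_{L}|^{p}]$ to reabsorb the $M_{L}^{*}$ factor, closes the upper bound $\mathbb{E}[|M_{L}|^{p}]\leq K_{2}\mathbb{E}[S_{L}^{p}]$. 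The lower bound in this regime can be obtained similarly, or deferred and recovered by duality once the $1<p<2$ case is settled.

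The main obstacle is the range $1<p<2$, where the above $L^{2}$-expansion is not available. Here I would deploy Burkholder's \emph{good-lambda inequality}
\[
\Pr\left(M_{L}^{*}>\beta\lambda,\ S_{L}\leq\delta\lambda\right)\leq C\,\delta^{2}\,\Pr\left(M_{L}^{*}>\lambda\right)\text{,}
\]
valid for $\beta>1$ and $\delta>0$ small enough, which is the key distributional comparison transferring control from the quadratic variation $S_{L}$ to the maximal process $M_{L}^{*}$. Multiplying by $p\lambda^{p-1}$ and integrating over $\lambda>0$, then swapping the order of integration, produces $\mathbb{E}[(M_{L}^{*})^{p}]\leq K_{p}\mathbb{E}[S_{L}^{p}]$ for every $p\in(1,\infty)$, and the desired one-sided bound follows since $|M_{L}|\leq M_{L}^{*}$. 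The converse inequality is obtained either by symmetrically interchanging the roles of $M_{L}$ and $S_{L}$ in the good-lambda argument, or, more concisely, by $L^{q}$-duality with $q=p/(p-1)$, which exchanges the two regimes $p>2$ and $p<2$ and lets the previously proved side of the inequality serve as the other. Because this is a classical result, the paper directly invokes \cite{B66}; the outline above is the self-contained route.
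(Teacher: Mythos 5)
The paper does not prove this lemma at all: it is quoted verbatim as a classical result with the citation to Burkholder's 1966 paper, so there is no ``paper proof'' to match against. Your outline is, in substance, the standard textbook proof of the Burkholder/Burkholder--Davis--Gundy inequality, and its architecture is sound: orthogonality for $p=2$; for $p>2$ a second-order expansion plus H\"{o}lder and Doob, with the resulting factor $\mathbb{E}\left[\left(M_{L}^{*}\right)^{p}\right]^{(p-2)/p}$ reabsorbed into the left-hand side; and the good-$\lambda$ inequality for $1<p<2$. Two places deserve tightening if you were to write this out. First, in the $p>2$ regime the telescoping identity you display is for $M_{l}^{2}-M_{l-1}^{2}$, but what is actually needed is the telescoping of $\left\vert M_{l}\right\vert^{p}-\left\vert M_{l-1}\right\vert^{p}$ via a second-order Taylor expansion of $t\mapsto\left\vert t\right\vert^{p}$ (using $\left\vert f^{\prime\prime}(t)\right\vert=p(p-1)\left\vert t\right\vert^{p-2}\leq p(p-1)\left(M_{L}^{*}\right)^{p-2}$); the squared increments alone do not produce the $p$-th moment bound. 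Also, the division step that reabsorbs $\mathbb{E}\left[\left(M_{L}^{*}\right)^{p}\right]^{(p-2)/p}$ requires this quantity to be finite a priori, which here is harmless since $L$ is finite but should be said (or handled by truncation). Second, the good-$\lambda$ inequality is itself the hard kernel of the $1<p<2$ case --- it needs a genuine stopping-time argument --- so invoking it is closer to restating the theorem than to proving it; likewise the ``$L^{q}$-duality'' remark is the thinnest link in the chain and would need the explicit pairing argument from Burkholder's original paper to be made rigorous. For the purposes of this paper none of this matters, since the result is correctly attributed and only its statement (extended routinely to complex-valued differences by treating real and imaginary parts separately) is used downstream in the proof that $\hat{\xi}_{M}^{\left(2\right)}$ and $\hat{\xi}_{M}^{\left(5\right)}$ vanish.
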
%

The result above as well as the next were originally proved for real
variables. Extensions to the complex case are straightforward. The following
result can be shown by using the martingale convergence theorem \cite{HH80B}.
We provide a sketch of the proof, which essentially follows the exposition in
\cite[Theorem 20.10]{D94B} (see also \cite[Corollary 3]{LS06} and references therein).%

\begin{theorem}%
\label{theoremMDScon}%
Let $\left\{  \mathcal{F}_{l}\right\}  $ be a given filtration and $\left\{
X_{l}\right\}  $ a square-integrable martingale difference sequence with
respect to $\left\{  \mathcal{F}_{l}\right\}  $. If%
\[
\sup_{L\geq1}\frac{1}{L}\sum_{l=1}^{L}\mathbb{E}\left[  \left\vert
X_{l}\right\vert ^{2}\left\vert \mathcal{F}_{l-1}\right.  \right]
<\infty\text{,}%
\]
then%
\[
\frac{1}{\sqrt{L}}\sum_{l=1}^{L}X_{l}\rightarrow0\text{,}%
\]
almost surely, as $L\rightarrow\infty$.%
\end{theorem}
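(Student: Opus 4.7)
The strategy, following Davidson's exposition cited in the paper, is to combine Burkholder's inequality (Lemma \ref{lemma_BIs}) with a Borel--Cantelli argument along a geometric subsequence, and then to fill in the gaps via a martingale maximal inequality. First, I would set $S_L=\sum_{l=1}^{L}X_l$ and, since $\{X_l\}$ is a martingale-difference sequence, invoke Burkholder with an exponent $p=2q>2$ to obtain
\[
\mathbb{E}[|S_L|^{2q}] \;\leq\; K_q\,\mathbb{E}\!\left[\Big(\sum_{l=1}^{L}|X_l|^2\Big)^{q}\right].
\]

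The next step is to convert this into a polynomial growth bound using the hypothesis. By Jensen's inequality $(\sum_l a_l)^q \leq L^{q-1}\sum_l a_l^q$, and then by iterating Burkholder on the predictable compensator (writing $|X_l|^2=\mathbb{E}[|X_l|^2|\mathcal{F}_{l-1}]+(|X_l|^2-\mathbb{E}[|X_l|^2|\mathcal{F}_{l-1}])$), one reduces the right-hand side to a quantity controlled by the sup in the hypothesis. The target bound is of the form
\[
\mathbb{E}[|S_L|^{2q}] \;\leq\; C_q\,L^{q}.
\]

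With this moment estimate in hand, I would apply Markov's inequality along a geometric subsequence $L_k=\lceil\theta^k\rceil$, $\theta>1$: for any $\varepsilon>0$ and $\eta>0$,
\[
P\!\left(|S_{L_k}|>\varepsilon L_k^{1/2+\eta}\right) \;\leq\; \frac{C_q}{\varepsilon^{2q} L_k^{2q\eta}},
\]
which is summable in $k$ once $q$ is chosen large enough. Borel--Cantelli then yields $L_k^{-1/2}S_{L_k}\to 0$ almost surely. To pass from the subsequence to all $L$, I would invoke Doob's martingale maximal inequality to bound $\max_{L_k\leq L\leq L_{k+1}}|S_L-S_{L_k}|$ in $L^{2q}$ by a multiple of $\mathbb{E}[|S_{L_{k+1}}-S_{L_k}|^{2q}]^{1/2q}$, which by the same Burkholder estimate is $O((L_{k+1}-L_k)^{1/2})$, and this is $o(L_k^{1/2})$ for $\theta$ close to one.

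The main obstacle is the moment-reduction in the second step, because the hypothesis controls only the conditional second moments $\mathbb{E}[|X_l|^2\,|\,\mathcal{F}_{l-1}]$, whereas a naive Jensen bound introduces higher absolute moments of $|X_l|$ that are not a priori available. The cleanest way around this is the compensator decomposition mentioned above, which reduces $\sum|X_l|^2$ to its predictable counterpart (directly bounded by $KL$ via the hypothesis) plus a martingale-difference remainder to which Burkholder can be applied recursively. An alternative route, which sidesteps higher moments entirely, is Kronecker's lemma: once $L^2$-control on the $X_l$ is extracted from the hypothesis, the weighted martingale $M_L:=\sum_{l=1}^{L}X_l/\sqrt{l}$ is $L^2$-bounded and therefore converges almost surely by the martingale convergence theorem, and Kronecker's lemma then immediately yields $L^{-1/2}\sum_{l=1}^{L}X_l\to 0$ almost surely.
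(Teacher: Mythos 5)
Your primary route (Burkholder moment bounds, Markov along a geometric subsequence, Borel--Cantelli, Doob maximal inequality to fill the gaps) is genuinely different from the paper's argument, but it contains a quantitative gap that cannot be closed. From $\mathbb{E}[|S_L|^{2q}]\leq C_q L^{q}$ and Markov you get $P\bigl(|S_{L_k}|>\varepsilon L_k^{1/2+\eta}\bigr)\leq C_q\varepsilon^{-2q}L_k^{-2q\eta}$, which is summable only for $\eta>0$; Borel--Cantelli therefore delivers $S_{L_k}=o\bigl(L_k^{1/2+\eta}\bigr)$ for each fixed $\eta>0$, \emph{not} $L_k^{-1/2}S_{L_k}\to 0$ as you assert --- at $\eta=0$ the bound $P(|S_{L_k}|>\varepsilon\sqrt{L_k})\leq C_q\varepsilon^{-2q}$ does not even tend to zero, and increasing $q$ does not help. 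This is not a repairable technicality of your method: for i.i.d.\ standard normal $X_l$ the hypothesis holds with the supremum equal to $1$, yet $L^{-1/2}S_L$ is exactly standard normal for every $L$ and its $\limsup$ is almost surely infinite by the law of the iterated logarithm. So no argument reaches the exact $\sqrt{L}$ normalization from the stated Ces\`aro-type hypothesis alone; what your route actually proves is the (still useful) statement with $L^{1/2+\eta}$ in the denominator.

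There is a second gap in the moment-reduction step: Burkholder leaves you with $\mathbb{E}\bigl[(\sum_l|X_l|^2)^q\bigr]$, and your compensator decomposition controls the predictable part by $KL$ but requires, for the remainder martingale $\sum_l\bigl(|X_l|^2-\mathbb{E}[|X_l|^2\mid\mathcal{F}_{l-1}]\bigr)$, bounds on $\mathbb{E}[|X_l|^{2q}]$ that the hypothesis (conditional \emph{second} moments only) does not supply. The alternative you sketch in your last sentence --- the weighted martingale $M_L=\sum_{l\leq L}X_l/\sqrt{l}$, the martingale convergence theorem, and Kronecker's lemma --- is essentially the paper's own proof (the paper works with $T_L=L^{-1/2}\sum_{l\leq L}X_l$, uses Burkholder only at $p=2$ to get $\sup_L\mathbb{E}[|T_L|^2]<\infty$, and then cites martingale convergence plus Kronecker). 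But note that this route genuinely needs $\sum_l\mathbb{E}[|X_l|^2\mid\mathcal{F}_{l-1}]/l<\infty$ for the $L^2$-boundedness of $M_L$, which is strictly stronger than boundedness of the Ces\`aro averages (the i.i.d.\ example again separates the two); your phrase ``once $L^2$-control on the $X_l$ is extracted from the hypothesis'' glosses over exactly this point. If you pursue the result, either strengthen the hypothesis to the summability condition and run the Kronecker route, or keep the hypothesis and settle for the $L^{1/2+\eta}$ (or $L$) normalization, which is what your subsequence argument correctly establishes.
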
%
%

\begin{proof}%
Define $T_{L}=L^{-1/2}\sum_{l=1}^{L}X_{l}$ so that $\left\{  T_{L}\right\}  $
is a square-integrable martingale with respect to $\left\{  \mathcal{F}%
_{l}\right\}  $. In particular, we have%
\[
\sup_{L\geq1}\mathbb{E}\left[  \left\vert T_{L}\right\vert \right]  \leq
\sup_{L\geq1}\mathbb{E}^{1/2}\left[  \left\vert T_{L}\right\vert ^{2}\right]
<\infty\text{,}%
\]
the last inequality following from Burkholder's inequality in Lemma
\ref{lemma_BIs}. Then, by the martingale convergence theorem we have that
$T_{L}$ converges almost surely as $L\rightarrow\infty$ to an integrable
random variable, and the result follows by Kronecker's lemma (see, e.g.,
\cite[pag. 31]{HH80B}).%
\end{proof}%

In the sequel, the matrix $\mathbf{\Theta}_{M}\in\mathbb{R}^{M\times M}$ will
denote an arbitrary nonrandom matrix having trace norm bounded uniformly in
$M$. Notice that $\left\Vert \mathbf{\Theta}_{M}\right\Vert _{F}\leq\left\Vert
\mathbf{\Theta}_{M}\right\Vert _{\operatorname*{tr}}$, and so the Frobenius
norm of $\mathbf{\Theta}_{M}$ is also uniformly bounded. For instance, if
$\mathbf{Z}_{M}\in\mathbb{R}^{M\times M}$ is an arbitrary nonrandom matrix
with uniformly bounded spectral (in $M$), then in the cases $\mathbf{\Theta
}_{M}=\frac{1}{M}\mathbf{Z}_{M}$ and $\mathbf{\Theta}_{M}=\boldsymbol{\upsilon
}_{M}\boldsymbol{\upsilon}_{M}^{T}$, we have $\left\Vert \frac{1}{M}%
\mathbf{Z}_{M}\right\Vert _{F}=\frac{1}{M^{1/2}}\left(  \frac{1}%
{M}\operatorname*{tr}\left[  \mathbf{Z}_{M}\mathbf{Z}_{M}^{T}\right]  \right)
^{1/2}=\mathcal{O}\left(  M^{-1/2}\right)  $ and $\left\Vert
\boldsymbol{\upsilon}_{M}\boldsymbol{\upsilon}_{M}^{T}\right\Vert
_{F}=\left\Vert \boldsymbol{\upsilon}_{M}\right\Vert ^{2}=\mathcal{O}\left(
1\right)  $, respectively. The following theorem will be instrumental in the
proof of our results. The theorem is originally stated in a more general form
for complex-valued matrices but applies verbatim for matrices with real-valued
entries \cite{RM11}.%

\begin{theorem}%
\label{theoGeneral}%
Under Assumptions (\textbf{As1}) to (\textbf{As3}), for each $z\in
\mathbb{C}-\mathbb{R}^{\mathbb{+}}$,%
\begin{align}
\operatorname*{tr}\left[  \mathbf{\Theta}_{M}\left(  \frac{1}{N}%
\mathbf{\tilde{Y}}_{N}\mathbf{\tilde{Y}}_{N}^{T}-z\mathbf{I}_{M}\right)
^{-1}\right]   &  \asymp\operatorname*{tr}\left[  \mathbf{\Theta}_{M}\left(
x_{M}\mathbf{R}-z\mathbf{I}_{M}\right)  ^{-1}\right]  \text{,}%
\label{theoGeneral_a}\\
\operatorname*{tr}\left[  \mathbf{\Theta}_{N}\left(  \frac{1}{N}%
\mathbf{\tilde{Y}}_{N}^{T}\mathbf{\tilde{Y}}_{N}-z\mathbf{I}_{N}\right)
^{-1}\right]   &  \asymp\operatorname*{tr}\left[  \mathbf{\Theta}_{N}\left(
\mathbf{I}_{N}+e_{M}\mathbf{T}\right)  ^{-1}\right]  \text{,}
\label{theoGeneral_b}%
\end{align}
where $\left\{  e_{M}=e_{M}\left(  z\right)  ,x_{M}=x_{M}\left(  z\right)
\right\}  $ is the unique solution in $\mathbb{C}^{\mathbb{+}}$ of the system
of equations:%
\[
\left\{
\begin{array}
[c]{l}%
e_{M}=\frac{1}{M}\operatorname*{tr}\left[  \mathbf{R}\left(  x_{M}%
\mathbf{T}-z\mathbf{I}_{M}\right)  ^{-1}\right] \\
x_{M}=\frac{1}{N}\operatorname*{tr}\left[  \mathbf{T}\left(  \mathbf{I}%
_{N}+e_{M}\mathbf{T}\right)  ^{-1}\right]  \text{.}%
\end{array}
\right.
\]
Moreover, given a symmetric nonnegative definite matrix $\mathbf{A}_{M}%
\in\mathbb{C}^{M\times M}$, we also have, for each $z\in\mathbb{C}%
-\mathbb{R}^{\mathbb{+}}$,%
\begin{equation}
\operatorname*{tr}\left[  \mathbf{\Theta}_{M}\left(  \mathbf{A}_{M}+\frac
{1}{N}\mathbf{\tilde{Y}}_{N}\mathbf{\tilde{Y}}_{N}^{T}-z\mathbf{I}_{M}\right)
^{-1}\right]  \asymp\operatorname*{tr}\left[  \mathbf{\Theta}_{M}\left(
\mathbf{A}_{M}+x_{M}\mathbf{R}_{M}-z\mathbf{I}_{M}\right)  ^{-1}\right]
\text{.} \label{theoGeneral_c}%
\end{equation}%
\end{theorem}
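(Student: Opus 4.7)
The plan is to follow the standard resolvent/Stieltjes-transform methodology for separable-covariance random matrices: set up the random resolvent and a candidate deterministic counterpart, bound the difference via the matrix resolvent identity combined with a column-by-column Sherman--Morrison decomposition, and close the argument by martingale concentration of the resulting quadratic forms. I would focus first on part (\ref{theoGeneral_a}), since parts (\ref{theoGeneral_b}) and (\ref{theoGeneral_c}) follow by symmetric or structurally identical arguments.

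Concretely, let $\mathbf{Q}(z)=(\tfrac{1}{N}\mathbf{\tilde{Y}}_N\mathbf{\tilde{Y}}_N^T - z\mathbf{I}_M)^{-1}$ and introduce the candidate deterministic equivalent $\mathbf{Q}^\circ(z)=(x_M\mathbf{R}_M - z\mathbf{I}_M)^{-1}$. From the resolvent identity,
\begin{equation*}
\mathbf{Q}-\mathbf{Q}^\circ = \mathbf{Q}^\circ\bigl(x_M\mathbf{R}_M - \tfrac{1}{N}\mathbf{\tilde{Y}}_N\mathbf{\tilde{Y}}_N^T\bigr)\mathbf{Q},
\end{equation*}
so $\operatorname{tr}[\mathbf{\Theta}_M(\mathbf{Q}-\mathbf{Q}^\circ)]$ splits, after writing $\tfrac{1}{N}\mathbf{\tilde{Y}}_N\mathbf{\tilde{Y}}_N^T = \tfrac{1}{N}\sum_{n=1}^N t_n\mathbf{R}_M^{1/2}\mathbf{x}_n\mathbf{x}_n^T\mathbf{R}_M^{1/2}$, into $N$ rank-one contributions. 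Applying the rank-one inversion identity (\ref{MIL-r1u}) to each term, I can replace $\mathbf{Q}$ by the leave-one-out resolvent $\mathbf{Q}_{(n)}$, which is independent of $\mathbf{x}_n$. The quadratic forms $\tfrac{t_n}{N}\mathbf{x}_n^T\mathbf{R}_M^{1/2}\mathbf{Q}_{(n)}\mathbf{R}_M^{1/2}\mathbf{x}_n$ then concentrate around $\tfrac{t_n}{N}\operatorname{tr}[\mathbf{R}_M\mathbf{Q}_{(n)}]$, which in turn differs from $\tfrac{t_n}{N}\operatorname{tr}[\mathbf{R}_M\mathbf{Q}]$ by a $\mathcal{O}(1/N)$ rank-one perturbation bound on traces. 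Summed over $n$, the differences between the quadratic forms and their conditional means form a martingale with respect to the filtration $\mathcal{F}_n=\sigma(\mathbf{x}_1,\ldots,\mathbf{x}_n)$, whose almost-sure convergence to zero follows from Burkholder's inequality (Lemma \ref{lemma_BIs}) at a moment $p>2$ together with Theorem \ref{theoremMDScon} and Borel--Cantelli.

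The self-consistent system is then closed by identifying $\tfrac{1}{M}\operatorname{tr}[\mathbf{R}_M\mathbf{Q}(z)]$ with $e_M(z)$ and $\tfrac{1}{N}\operatorname{tr}[\mathbf{T}_N(\mathbf{I}_N+e_M\mathbf{T}_N)^{-1}]$ with $x_M(z)$ in the limit, and by invoking the uniqueness of the positive solution to the fixed-point system (as established in \cite{HKLNP08}) on account of the Stieltjes transform being analytic on $\mathbb{C}-\mathbb{R}^+$. Part (\ref{theoGeneral_b}) follows by a symmetric argument using the co-resolvent $(\tfrac{1}{N}\mathbf{\tilde{Y}}_N^T\mathbf{\tilde{Y}}_N - z\mathbf{I}_N)^{-1}$ with a rank-one decomposition along the rows of $\mathbf{X}_N$; part (\ref{theoGeneral_c}) follows by the same machinery applied to the shifted resolvent $(\mathbf{A}_M-z\mathbf{I}_M+\tfrac{1}{N}\mathbf{\tilde{Y}}_N\mathbf{\tilde{Y}}_N^T)^{-1}$, whose operator norm remains uniformly bounded on $\mathbb{C}-\mathbb{R}^+$ thanks to the nonnegativity of $\mathbf{A}_M$, so that the Sherman--Morrison/martingale steps carry through essentially unchanged.

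The main obstacle, I expect, is the combination of two refinements over the most standard setting. First, the test matrix $\mathbf{\Theta}_M$ is assumed only to have bounded trace norm (rather than bounded spectral norm under a $\tfrac{1}{M}$ normalization), which covers effectively low-rank choices such as $\boldsymbol{\upsilon}_M\boldsymbol{\upsilon}_M^T$; consequently the concentration estimates for the quadratic forms must be sharp enough against such test matrices, and one needs uniform control of $\|\mathbf{Q}(z)\|$ on the relevant region of $z$. Second, closing the fixed-point equations must be handled carefully so that the almost-sure convergence is strong enough to yield genuine asymptotic deterministic equivalents in the sense $|a-b|\to 0$ almost surely, rather than merely in probability; this is precisely where the higher-moment Burkholder estimate does the heavy lifting.
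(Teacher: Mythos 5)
The paper does not actually prove Theorem \ref{theoGeneral}: it is imported from \cite{RM11} (the text immediately preceding the statement says the theorem ``is originally stated in a more general form for complex-valued matrices''), so there is no in-paper argument to compare yours against line by line. That said, your outline is a faithful reconstruction of the methodology of the cited reference: resolvent identity against the candidate deterministic equivalent, column-wise Sherman--Morrison to pass to the leave-one-out resolvent, concentration of the quadratic forms, and Burkholder plus Borel--Cantelli to get almost-sure convergence; and you correctly flag the two genuinely delicate refinements, namely that $\mathbf{\Theta}_M$ is only trace-norm bounded (so the result is really a bilinear-form deterministic equivalent, not just a normalized-trace one) and that the convergence must be almost sure. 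The one step your sketch compresses too much is the closure of the system: uniqueness of $\left\{ e_{M},x_{M}\right\}$ in $\mathbb{C}^{+}$ is not by itself enough to identify the limit. The standard route first establishes the approximation with the \emph{random} self-consistent quantities $\frac{1}{M}\operatorname*{tr}\left[ \mathbf{R}_{M}\mathbf{Q}\left( z\right) \right]$ and $\frac{1}{N}\operatorname*{tr}\left[ \mathbf{T}_{N}\left( \mathbf{I}_{N}+\hat{e}\mathbf{T}_{N}\right) ^{-1}\right]$ in place of $e_{M}$ and $x_{M}$, and then needs a stability (contraction) estimate on the coupled equations to show that these random quantities converge to the deterministic fixed point; this is where the analyticity in $z$ and the sign structure of the imaginary parts of $e_{M}$ and $x_{M}$ are actually used, and it should be stated explicitly rather than folded into ``invoking uniqueness.''
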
%

In particular, notice that $\left\{  \tilde{\delta}_{M},\delta_{M}\right\}  $
coincides with $\left\{  e_{M}=e_{M}\left(  z\right)  ,x_{M}=x_{M}\left(
z\right)  \right\}  $ evaluated at $z=-\alpha_{M}$ (see \cite[Proposition
1]{HKLNP08}). Moreover, we remark that where $\zeta_{M}=e_{M}^{\prime}$ and
$\tilde{\zeta}_{M}=x_{M}^{\prime}$, where $e_{M}^{\prime}=e_{M}^{\prime
}\left(  z\right)  $ and $x_{M}^{\prime}=x_{M}^{\prime}\left(  z\right)  $ are
the derivatives wrt. $z$ of, respectively, $e_{M}$ and $x_{M}$, namely given
by%
\begin{align}
e_{M}^{\prime}  &  =\frac{\frac{1}{M}\operatorname*{tr}\left[  \mathbf{R}%
_{M}\left(  x_{M}\mathbf{R}_{M}-z\mathbf{I}_{M}\right)  ^{-2}\right]
}{1-\frac{1}{M}\operatorname*{tr}\left[  \mathbf{R}_{M}^{2}\left(
x_{M}\mathbf{R}_{M}-z\mathbf{I}_{M}\right)  ^{-2}\right]  \frac{1}%
{N}\operatorname*{tr}\left[  \mathbf{T}_{N}^{2}\left(  \mathbf{I}_{N}%
+e_{M}\mathbf{T}_{N}\right)  ^{-2}\right]  }\text{,}\label{eDz}\\
x_{M}^{\prime}  &  =-\frac{\frac{1}{M}\operatorname*{tr}\left[  \mathbf{R}%
_{M}\left(  x_{M}\mathbf{R}_{M}-z\mathbf{I}_{M}\right)  ^{-2}\right]  \frac
{1}{N}\operatorname*{tr}\left[  \mathbf{T}_{N}^{2}\left(  \mathbf{I}_{N}%
+e_{M}\mathbf{T}_{N}\right)  ^{-2}\right]  }{1-\frac{1}{M}\operatorname*{tr}%
\left[  \mathbf{R}_{M}^{2}\left(  x_{M}\mathbf{R}_{M}-z\mathbf{I}_{M}\right)
^{-2}\right]  \frac{1}{N}\operatorname*{tr}\left[  \mathbf{T}_{N}^{2}\left(
\mathbf{I}_{N}+e_{M}\mathbf{T}_{N}\right)  ^{-2}\right]  }\text{.} \label{xDz}%
\end{align}

Along with Theorem \ref{theoGeneral}, the following proposition will also be a
key element in proving Theorem \ref{theoADE} and Theorem \ref{theoGCE}.%

\begin{proposition}%
\label{propAux}%
Let the definitions and assumptions on the data model specified until now
hold. Then, for each $z\in\mathbb{C}-\mathbb{R}^{\mathbb{+}}$,%
\begin{align}
\operatorname*{tr}\left[  \mathbf{\Theta}_{M}\frac{1}{N}\mathbf{\tilde{Y}}%
_{N}\mathbf{\tilde{Y}}_{N}^{T}\left(  \frac{1}{N}\mathbf{\tilde{Y}}%
_{N}\mathbf{\tilde{Y}}_{N}^{T}-z\mathbf{I}_{M}\right)  ^{-1}\right]   &
\asymp x_{M}\operatorname*{tr}\left[  \mathbf{\Theta}_{M}\mathbf{R}\left(
x_{M}\mathbf{R}-z\mathbf{I}_{M}\right)  ^{-1}\right]  \text{,}%
\label{propAux1a}\\
\operatorname*{tr}\left[  \mathbf{\Theta}_{N}\frac{1}{N}\mathbf{\tilde{Y}}%
_{N}^{T}\mathbf{\tilde{Y}}_{N}\left(  \frac{1}{N}\mathbf{\tilde{Y}}_{N}%
^{T}\mathbf{\tilde{Y}}_{N}-z\mathbf{I}_{N}\right)  ^{-1}\right]   &  \asymp
e_{M}\operatorname*{tr}\left[  \mathbf{\Theta}_{N}\mathbf{T}\left(
\mathbf{I}_{N}+e_{M}\mathbf{T}\right)  ^{-1}\right]  \text{,}
\label{propAux1b}%
\end{align}
where $\left\{  e_{M}=e_{M}\left(  z\right)  ,x_{M}=x_{M}\left(  z\right)
\right\}  $ are defined as in Theorem \ref{theoGeneral}. Moreover, we also
have, for each $z\in\mathbb{C}-\mathbb{R}^{\mathbb{+}}$,%
\begin{align}
\operatorname*{tr}\left[  \mathbf{\Theta}_{M}\frac{1}{N}\mathbf{\tilde{Y}}%
_{N}\mathbf{\tilde{Y}}_{N}^{T}\left(  \frac{1}{N}\mathbf{\tilde{Y}}%
_{N}\mathbf{\tilde{Y}}_{N}^{T}-z\mathbf{I}_{M}\right)  ^{-2}\right]   &
\asymp\left(  x_{M}-zx_{M}^{\prime}\right)  \operatorname*{tr}\left[
\mathbf{\Theta}_{M}\mathbf{R}_{M}\left(  x_{M}\mathbf{R}_{M}-z\mathbf{I}%
_{M}\right)  ^{-2}\right]  \text{,}\label{propAux2a}\\
\operatorname*{tr}\left[  \mathbf{\Theta}_{N}\frac{1}{N}\mathbf{\tilde{Y}}%
_{N}^{T}\mathbf{\tilde{Y}}_{N}\left(  \frac{1}{N}\mathbf{\tilde{Y}}_{N}%
^{T}\mathbf{\tilde{Y}}_{N}-z\mathbf{I}_{N}\right)  ^{-2}\right]   &  \asymp
e_{M}^{\prime}\operatorname*{tr}\left[  \mathbf{\Theta}_{N}\mathbf{T}%
_{N}\left(  \mathbf{I}_{N}+e_{M}\mathbf{T}_{N}\right)  ^{-2}\right]  \text{,}
\label{propAux2b}%
\end{align}
where $e_{M}^{\prime}$ and $x_{M}^{\prime}$ are given by (\ref{eDz}) and
(\ref{xDz}), respectively.%
\end{proposition}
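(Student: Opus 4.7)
My plan is to derive all four asymptotic equivalences by reducing them to Theorem \ref{theoGeneral} via elementary algebraic manipulations of the resolvent, and, for the second pair, by differentiating the resulting equivalences in $z$ and using the implicit definitions of $e_{M}^{\prime}$ and $x_{M}^{\prime}$.

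For (\ref{propAux1a}) and (\ref{propAux1b}), my starting point is the elementary identity $\mathbf{A}(\mathbf{A}-z\mathbf{I})^{-1} = \mathbf{I} + z(\mathbf{A}-z\mathbf{I})^{-1}$ applied to $\mathbf{A}=\tfrac{1}{N}\mathbf{\tilde{Y}}_{N}\mathbf{\tilde{Y}}_{N}^{T}$. Taking the trace against $\mathbf{\Theta}_{M}$, the quantity in (\ref{propAux1a}) becomes $\operatorname*{tr}[\mathbf{\Theta}_{M}] + z\operatorname*{tr}[\mathbf{\Theta}_{M}(\tfrac{1}{N}\mathbf{\tilde{Y}}_{N}\mathbf{\tilde{Y}}_{N}^{T}-z\mathbf{I}_{M})^{-1}]$; I would apply (\ref{theoGeneral_a}) to the second summand and regroup using $\mathbf{I}_{M} + z(x_{M}\mathbf{R}-z\mathbf{I}_{M})^{-1} = x_{M}\mathbf{R}(x_{M}\mathbf{R}-z\mathbf{I}_{M})^{-1}$ to land on the claimed form. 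For (\ref{propAux1b}) I would carry out the symmetric manipulation on $\tfrac{1}{N}\mathbf{\tilde{Y}}_{N}^{T}\mathbf{\tilde{Y}}_{N}$, invoke (\ref{theoGeneral_b}), and collect terms via $\mathbf{I}_{N} - (\mathbf{I}_{N}+e_{M}\mathbf{T})^{-1} = e_{M}\mathbf{T}(\mathbf{I}_{N}+e_{M}\mathbf{T})^{-1}$ to conclude.

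For (\ref{propAux2a}) and (\ref{propAux2b}), my plan is to apply the analogous decomposition $\mathbf{A}(\mathbf{A}-z\mathbf{I})^{-2} = (\mathbf{A}-z\mathbf{I})^{-1} + z(\mathbf{A}-z\mathbf{I})^{-2}$, reducing the problem to obtaining deterministic equivalents for the \emph{squared} resolvent. I would obtain these by differentiating (\ref{theoGeneral_a}) and (\ref{theoGeneral_b}) with respect to $z$: the chain rule exposes the derivatives $e_{M}^{\prime}$ and $x_{M}^{\prime}$, and implicit differentiation of the fixed-point system defining $\{e_{M},x_{M}\}$ recovers exactly the closed forms (\ref{eDz})--(\ref{xDz}). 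Writing $(x_{M}\mathbf{R}-z\mathbf{I}_{M})^{-1} = (x_{M}\mathbf{R}-z\mathbf{I}_{M})(x_{M}\mathbf{R}-z\mathbf{I}_{M})^{-2}$ and collecting terms proportional to $(x_{M}\mathbf{R}-z\mathbf{I}_{M})^{-2}$, the coefficient $x_{M}-zx_{M}^{\prime}$ emerges and reproduces (\ref{propAux2a}); the mirror regrouping on the $N$-side produces $e_{M}^{\prime}$ as the coefficient in (\ref{propAux2b}), with the two terms proportional to $\tfrac{1}{z}\operatorname*{tr}[\mathbf{\Theta}_{N}(\mathbf{I}_{N}+e_{M}\mathbf{T})^{-1}]$ cancelling exactly.

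The main technical obstacle is justifying the differentiation in $z$ of the almost-sure equivalences in Theorem \ref{theoGeneral}. I would handle this by a standard Vitali-type argument: for each fixed $\mathbf{\Theta}_{M}$, the map $z \mapsto \operatorname*{tr}[\mathbf{\Theta}_{M}(\tfrac{1}{N}\mathbf{\tilde{Y}}_{N}\mathbf{\tilde{Y}}_{N}^{T}-z\mathbf{I}_{M})^{-1}]$ is analytic on $\mathbb{C}\setminus\mathbb{R}^{+}$ and, using the resolvent bound $\|(\mathbf{A}-z\mathbf{I})^{-1}\| \le |\operatorname{Im}z|^{-1}$ together with the uniform trace-norm control on $\mathbf{\Theta}_{M}$, is locally uniformly bounded in $z$ on compact subsets of $\mathbb{C}\setminus\mathbb{R}^{+}$. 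Almost-sure pointwise convergence therefore upgrades to locally uniform convergence on such compacta (Vitali's theorem), and Cauchy's integral formula on a small positively oriented circle around $z$ transfers this to almost-sure convergence of the derivatives, which is precisely what the squared-resolvent terms require. The identical reasoning applied to (\ref{theoGeneral_b}) yields the required equivalent for the dual squared resolvent, after which the algebra outlined above completes the proof of all four identities.
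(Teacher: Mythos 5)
Your proposal is correct and follows essentially the same route as the paper: the first pair is obtained from the identity $\mathbf{A}\left(\mathbf{A}-z\mathbf{I}\right)^{-1}=\mathbf{I}+z\left(\mathbf{A}-z\mathbf{I}\right)^{-1}$ (the paper derives it via the matrix inversion lemma) combined with Theorem \ref{theoGeneral}, and the second pair by differentiating an established almost-sure equivalence of analytic resolvent functionals in $z$, with the interchange of limit and derivative justified by a Vitali/Weierstrass argument exactly as in the paper. The only cosmetic difference is that the paper differentiates $\operatorname*{tr}[\mathbf{\Theta}\,\mathbf{A}(\mathbf{A}-z\mathbf{I})^{-1}]$ directly rather than first splitting off the squared resolvent, which amounts to the same algebra.
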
%
%

\begin{proof}%
The proof of (\ref{propAux1a}) and (\ref{propAux1b}) follow the same lines of
reasoning. We show (\ref{propAux1b}). First, notice that%
\[
\operatorname*{tr}\left[  \mathbf{\Theta}_{N}\frac{1}{N}\mathbf{\tilde{Y}}%
_{N}^{T}\mathbf{\tilde{Y}}_{N}\left(  \frac{1}{N}\mathbf{\tilde{Y}}_{N}%
^{T}\mathbf{\tilde{Y}}_{N}-z\mathbf{I}_{N}\right)  ^{-1}\right]
=\operatorname*{tr}\left[  \mathbf{\Theta}_{N}\frac{1}{N}\mathbf{\tilde{Y}%
}^{T}\left(  \frac{1}{N}\mathbf{\tilde{Y}}_{N}\mathbf{\tilde{Y}}_{N}%
^{T}-z\mathbf{I}_{M}\right)  ^{-1}\mathbf{\tilde{Y}}\right]  \text{.}%
\]
Moreover, using the matrix inversion lemma in (\ref{MIL}) we get%
\[
\frac{1}{N}\mathbf{\tilde{Y}}^{T}\left(  \frac{1}{N}\mathbf{\tilde{Y}}%
_{N}\mathbf{\tilde{Y}}_{N}^{T}-z\mathbf{I}_{M}\right)  ^{-1}\mathbf{\tilde{Y}%
}=\mathbf{I}_{N}-\left(  \mathbf{I}_{N}-\frac{1}{z}\frac{1}{N}\mathbf{\tilde
{Y}}_{N}^{T}\mathbf{\tilde{Y}}_{N}\right)  ^{-1}\text{,}%
\]
and, then, write%
\begin{equation}
\operatorname*{tr}\left[  \mathbf{\Theta}_{N}\frac{1}{N}\mathbf{\tilde{Y}}%
_{N}^{T}\mathbf{\tilde{Y}}_{N}\left(  \frac{1}{N}\mathbf{\tilde{Y}}_{N}%
^{T}\mathbf{\tilde{Y}}_{N}-z\mathbf{I}_{N}\right)  ^{-1}\right]
=\operatorname*{tr}\left[  \mathbf{\Theta}_{N}\right]  +z\operatorname*{tr}%
\left[  \mathbf{\Theta}_{N}\left(  \frac{1}{N}\mathbf{\tilde{Y}}_{N}%
^{T}\mathbf{\tilde{Y}}_{N}-z\mathbf{I}_{N}\right)  ^{-1}\right]  \text{.}
\label{Lemma2MIL}%
\end{equation}
Now, Theorem \ref{theoGeneral} yields%
\[
\operatorname*{tr}\left[  \mathbf{\Theta}_{N}\left(  \frac{1}{N}%
\mathbf{\tilde{Y}}_{N}^{T}\mathbf{\tilde{Y}}_{N}-z\mathbf{I}_{N}\right)
^{-1}\right]  \asymp\operatorname*{tr}\left[  \mathbf{\Theta}_{N}\left(
\mathbf{I}_{N}+e_{M}\mathbf{T}\right)  ^{-1}\right]  \text{.}%
\]
Then, from (\ref{Lemma2MIL}), we finally have that%
\begin{align*}
\operatorname*{tr}\left[  \mathbf{\Theta}_{N}\frac{1}{N}\mathbf{\tilde{Y}}%
_{N}^{T}\mathbf{\tilde{Y}}_{N}\left(  \frac{1}{N}\mathbf{\tilde{Y}}_{N}%
^{T}\mathbf{\tilde{Y}}_{N}-z\mathbf{I}_{N}\right)  ^{-1}\right]   &
\asymp\operatorname*{tr}\left[  \mathbf{\Theta}_{N}\left(  \mathbf{I}%
_{N}-\left(  \mathbf{I}_{N}+e_{M}\mathbf{T}\right)  ^{-1}\right)  \right] \\
&  =e_{M}\operatorname*{tr}\left[  \mathbf{\Theta}_{N}\mathbf{T}\left(
\mathbf{I}_{N}+e_{M}\mathbf{T}\right)  ^{-1}\right]  \text{.}%
\end{align*}

Regarding the proof of (\ref{propAux2a}) and (\ref{propAux2b}), we first
notice that%
\begin{align*}
\operatorname*{tr}\left[  \mathbf{\Theta}_{M}\frac{1}{N}\mathbf{\tilde{Y}}%
_{N}\mathbf{\tilde{Y}}_{N}^{T}\left(  \frac{1}{N}\mathbf{\tilde{Y}}%
_{N}\mathbf{\tilde{Y}}_{N}^{T}-z\mathbf{I}_{M}\right)  ^{-2}\right]   &
=\frac{\partial}{\partial z}\left\{  \operatorname*{tr}\left[  \mathbf{\Theta
}_{M}\frac{1}{N}\mathbf{\tilde{Y}}_{N}\mathbf{\tilde{Y}}_{N}^{T}\left(
\frac{1}{N}\mathbf{\tilde{Y}}_{N}\mathbf{\tilde{Y}}_{N}^{T}-z\mathbf{I}%
_{M}\right)  ^{-1}\right]  \right\}  \text{,}\\
\operatorname*{tr}\left[  \mathbf{\Theta}\frac{1}{N}\mathbf{\tilde{Y}}_{N}%
^{T}\mathbf{\tilde{Y}}_{N}\left(  \frac{1}{N}\mathbf{\tilde{Y}}_{N}%
^{T}\mathbf{\tilde{Y}}_{N}-z\mathbf{I}_{N}\right)  ^{-2}\right]   &
=\frac{\partial}{\partial z}\left\{  \operatorname*{tr}\left[  \mathbf{\Theta
}_{M}\frac{1}{N}\mathbf{\tilde{Y}}_{N}^{T}\mathbf{\tilde{Y}}_{N}\left(
\frac{1}{N}\mathbf{\tilde{Y}}_{N}^{T}\mathbf{\tilde{Y}}_{N}-z\mathbf{I}%
_{M}\right)  ^{-1}\right]  \right\}  \text{.}%
\end{align*}
Moreover, the almost sure convergence stated in (\ref{propAux1a}) and
(\ref{propAux1b}) is uniform on $\mathbb{C}-\mathbb{R}^{\mathbb{+}}$, and
therefore the convergence of the derivatives holds by the Weierstrass
convergence theorem \cite{A78B} (see alternatively argument in \cite[Lemma
2.3]{BS04} based on Vitali's theorem about the uniform convergence of
sequences of uniformly bounded holomorphic functions towards a holomorphic
function \cite{R87B,H62B}).%
\end{proof}%

\section{Proof of Theorem \ref{theoADE}%
\label{appTheoADE}%
}

Next, we separately proof the convergence of each term in the statement of the theorem.

\subsection{The terms $\hat{\xi}_{M}^{\left(  1\right)  }$ and $\hat{\xi}%
_{M}^{\left(  3\right)  }$}

In particular, the asymptotic deterministic equivalent of $\hat{\xi}%
_{M}^{\left(  1\right)  }$ follows readily by Theorem \ref{theoGeneral}.
Regarding $\hat{\xi}_{M}^{\left(  3\right)  }$, after observing first that%
\[
\boldsymbol{\hat{\upsilon}}_{M}^{T}\mathbf{\hat{\Sigma}}_{M}^{-1}%
\boldsymbol{\hat{\upsilon}}_{M}=\boldsymbol{\upsilon}_{N}^{T}\left(  \frac
{1}{N}\mathbf{\tilde{Y}}_{N}^{T}\mathbf{\tilde{Y}}_{N}+\alpha_{M}%
\mathbf{I}_{M}\right)  ^{-1}\frac{1}{N}\mathbf{\tilde{Y}}_{N}^{T}%
\mathbf{\tilde{Y}}_{N}\mathbf{\upsilon}_{N}\text{,}%
\]
the result is obtained by applying (\ref{propAux1b}) in Proposition
\ref{propAux}.

\subsection{The term $\hat{\xi}_{M}^{\left(  2\right)  }$ and $\hat{\xi}%
_{M}^{\left(  5\right)  }$}

We recall that $\hat{\xi}_{M}^{\left(  2\right)  }=\boldsymbol{\upsilon}%
_{M}^{T}\mathbf{\hat{\Sigma}}_{M}^{-1}\boldsymbol{\hat{\upsilon}}_{M}=\frac
{1}{N}\boldsymbol{\upsilon}_{M}^{T}\mathbf{\hat{\Sigma}}_{M}^{-1}%
\mathbf{\tilde{Y}}_{N}\boldsymbol{\tilde{\upsilon}}_{N}$ and write%
\begin{equation}
\frac{1}{N}\boldsymbol{\upsilon}_{M}^{T}\mathbf{\hat{\Sigma}}_{M}%
^{-1}\mathbf{\tilde{Y}}_{N}\boldsymbol{\tilde{\upsilon}}_{N}=\frac{1}%
{N}\boldsymbol{\upsilon}_{M}^{T}\mathbf{Q}_{\left(  n\right)  }\mathbf{\tilde
{Y}}_{N}\boldsymbol{\tilde{\upsilon}}_{N}-q_{n}\frac{1}{N^{2}}%
\boldsymbol{\upsilon}_{M}^{T}\mathbf{Q}_{\left(  n\right)  }\mathbf{\tilde{y}%
}_{n}\mathbf{\tilde{y}}_{n}^{T}\mathbf{Q}_{\left(  n\right)  }\mathbf{\tilde
{Y}}_{N}\boldsymbol{\tilde{\upsilon}}_{N} \label{xi3aux1}%
\end{equation}
where we have defined $q_{n}=\left(  1+\frac{1}{N}\mathbf{y}_{n}^{T}%
\mathbf{Q}_{\left(  n\right)  }\mathbf{y}_{n}\right)  ^{-1}$. Moreover, recall
also that $\hat{\xi}_{M}^{\left(  5\right)  }=\boldsymbol{\upsilon}_{M}%
^{T}\mathbf{\hat{\Sigma}}_{M}^{-1}\mathbf{R}_{M}\mathbf{\hat{\Sigma}}_{M}%
^{-1}\boldsymbol{\hat{\upsilon}}_{M}=\frac{1}{N}\boldsymbol{\upsilon}_{M}%
^{T}\mathbf{\hat{\Sigma}}_{M}^{-1}\mathbf{R}_{M}\mathbf{\hat{\Sigma}}_{M}%
^{-1}\mathbf{\tilde{Y}}_{N}\boldsymbol{\tilde{\upsilon}}_{N}$, and consider
first the following notations:%
\[
\chi_{j,n}=\frac{1}{N}\mathbf{\tilde{y}}_{n}^{T}\mathbf{Q}_{\left(  n\right)
}\mathbf{Z}_{j\left(  n\right)  }\mathbf{\tilde{y}}_{n}-\frac{1}%
{N}\operatorname*{tr}\left[  \mathbf{Q}_{\left(  n\right)  }\mathbf{Z}%
_{j\left(  n\right)  }\right]  \text{,}%
\]
for $j=1,2$, with $\mathbf{Z}_{j\left(  n\right)  }$ being arbitrary $M\times
M$ dimensional matrices, possibly random but not depending on the
$\mathbf{x}_{n}$, and such that $\sup_{M\geq1}\left\Vert \mathbf{Z}_{j\left(
n\right)  }\right\Vert <+\infty$; in particular, $\mathbf{Z}_{1\left(
n\right)  }=\mathbf{I}_{M}$ and $\mathbf{Z}_{2\left(  n\right)  }%
=\mathbf{Q}_{\left(  n\right)  }\mathbf{R}_{M}$. Then, observe that we can
write $\hat{\xi}_{M}^{\left(  2\right)  }$ and $\hat{\xi}_{M}^{\left(
5\right)  }$ as, respectively,%
\begin{align}
\frac{1}{N}\boldsymbol{\upsilon}_{M}^{T}\mathbf{Q}_{\left(  n\right)
}\mathbf{\tilde{Y}}_{N}\boldsymbol{\tilde{\upsilon}}_{N}  &  =\frac{1}{N}%
\sum_{n=1}^{N}\mathbf{z}_{1\left(  n\right)  }^{T}\mathbf{x}_{n}+\frac{1}%
{N}\sum_{n=1}^{N}\mathbf{z}_{2\left(  n\right)  }^{T}\mathbf{x}_{n}\nonumber\\
&  +\frac{1}{N}\sum_{n=1}^{N}\chi_{1,n}t_{n}q_{n}\left[  \boldsymbol{\tilde
{\upsilon}}_{N}\right]  _{n}\mathbf{\tilde{\upsilon}}_{M}^{T}\mathbf{Q}%
_{\left(  n\right)  }\mathbf{Z}_{1\left(  n\right)  }\mathbf{\tilde{y}}%
_{n}\text{,} \label{vanish1}%
\end{align}
where the first term on the RHS follows from $\frac{1}{N}\boldsymbol{\upsilon
}_{M}^{T}\mathbf{Q}_{\left(  n\right)  }\mathbf{\tilde{Y}}_{N}%
\boldsymbol{\tilde{\upsilon}}_{N}$, and, similarly,%
\begin{align}
\frac{1}{N}\boldsymbol{\upsilon}_{M}^{T}\mathbf{\hat{\Sigma}}_{M}%
^{-1}\mathbf{R}_{M}\mathbf{\hat{\Sigma}}_{M}^{-1}\mathbf{\tilde{Y}}%
_{N}\boldsymbol{\tilde{\upsilon}}_{N}  &  =\frac{1}{N}\sum_{n=1}^{N}%
\mathbf{z}_{3\left(  n\right)  }^{T}\mathbf{x}_{n}-\frac{1}{N}\sum_{n=1}%
^{N}\mathbf{z}_{4\left(  n\right)  }^{T}\mathbf{x}_{n}-\frac{1}{N}\sum
_{n=1}^{N}\mathbf{z}_{5\left(  n\right)  }^{T}\mathbf{x}_{n}\nonumber\\
&  -\frac{1}{N}\sum_{n=1}^{N}\chi_{1,n}t_{n}q_{n}\left[  \boldsymbol{\tilde
{\upsilon}}_{N}\right]  _{n}\boldsymbol{\upsilon}_{M}^{T}\mathbf{Q}_{\left(
n\right)  }\mathbf{Z}_{2\left(  n\right)  }\mathbf{\tilde{y}}_{n}-\frac{1}%
{N}\sum_{n=1}^{N}\chi_{2,n}t_{n}q_{n}\left[  \boldsymbol{\tilde{\upsilon}}%
_{N}\right]  _{n}\boldsymbol{\upsilon}_{M}^{T}\mathbf{Q}_{\left(  n\right)
}\mathbf{Z}_{1\left(  n\right)  }\mathbf{\tilde{y}}_{n}\nonumber\\
&  +\frac{1}{N}\sum_{n=1}^{N}t_{n}q_{n}^{2}\left[  \boldsymbol{\tilde
{\upsilon}}_{N}\right]  _{n}\frac{1}{N}\boldsymbol{\upsilon}_{M}^{T}%
\mathbf{Q}_{\left(  n\right)  }\mathbf{Z}_{1\left(  n\right)  }\mathbf{\tilde
{y}}_{n}\frac{1}{N}\mathbf{\tilde{y}}_{n}^{T}\mathbf{Q}_{\left(  n\right)
}\mathbf{Z}_{2\left(  n\right)  }\mathbf{\tilde{y}}_{n}\frac{1}{N}%
\mathbf{\tilde{y}}_{n}^{T}\mathbf{Q}_{\left(  n\right)  }\mathbf{Z}_{1\left(
n\right)  }\mathbf{\tilde{y}}_{n}\text{,} \label{vanish2}%
\end{align}
where the following definitions apply:%
\begin{align*}
\mathbf{z}_{1\left(  n\right)  }  &  =t_{n}\left[  \boldsymbol{\tilde
{\upsilon}}_{N}\right]  _{n}\mathbf{R}_{M}^{1/2}\mathbf{Z}_{1\left(  n\right)
}\mathbf{Q}_{\left(  n\right)  }\boldsymbol{\upsilon}_{M}\text{,}\\
\mathbf{z}_{2\left(  n\right)  }  &  =t_{n}q_{n}\left[  \boldsymbol{\tilde
{\upsilon}}_{N}\right]  _{n}\frac{1}{N}\operatorname*{tr}\left[
\mathbf{Q}_{\left(  n\right)  }\mathbf{Z}_{1\left(  n\right)  }\right]
\mathbf{R}_{M}^{1/2}\mathbf{Z}_{1\left(  n\right)  }\mathbf{Q}_{\left(
n\right)  }\boldsymbol{\upsilon}_{M}\text{,}\\
\mathbf{z}_{3\left(  n\right)  }  &  =t_{n}\left[  \boldsymbol{\tilde
{\upsilon}}_{N}\right]  _{n}\mathbf{R}_{M}^{1/2}\mathbf{Z}_{2\left(  n\right)
}\mathbf{Q}_{\left(  n\right)  }\mathbf{\tilde{\upsilon}}_{M}\text{,}\\
\mathbf{z}_{4\left(  n\right)  }  &  =t_{n}q_{n}\left[  \boldsymbol{\tilde
{\upsilon}}_{N}\right]  _{n}\frac{1}{N}\operatorname*{tr}\left[
\mathbf{Q}_{\left(  n\right)  }\right]  \mathbf{R}_{M}^{1/2}\mathbf{Z}%
_{2\left(  n\right)  }\mathbf{Q}_{\left(  n\right)  }\boldsymbol{\upsilon}%
_{M}\text{,}\\
\mathbf{z}_{5\left(  n\right)  }  &  =t_{n}q_{n}\left[  \boldsymbol{\tilde
{\upsilon}}_{N}\right]  _{n}\frac{1}{N}\operatorname*{tr}\left[
\mathbf{Q}_{\left(  n\right)  }\mathbf{Z}_{2\left(  n\right)  }\right]
\mathbf{R}_{M}^{1/2}\mathbf{Q}_{\left(  n\right)  }\boldsymbol{\upsilon}%
_{M}\text{.}%
\end{align*}

We now prove that the terms of the form $\frac{1}{N}\sum_{n=1}^{N}%
\mathbf{z}_{k\left(  n\right)  }^{T}\mathbf{x}_{n}$ vanish almost surely. To
see this, we further define the following two $L=MN$ dimensional vectors,
namely, $\mathbf{x}=\left[
\begin{array}
[c]{ccc}%
\mathbf{x}_{1}^{T} & \cdots & \mathbf{x}_{N}^{T}%
\end{array}
\right]  $, and $\mathbf{z}_{k}=\left[
\begin{array}
[c]{ccc}%
\mathbf{z}_{k\left(  1\right)  }^{T} & \cdots & \mathbf{z}_{k\left(  N\right)
}^{T}%
\end{array}
\right]  $, $k=1,2,3,4,5$. Then, we notice that%
\begin{equation}
\frac{1}{N}\sum_{n=1}^{N}\mathbf{z}_{k\left(  n\right)  }^{T}\mathbf{x}%
_{n}=\frac{1}{N}\mathbf{z}_{k}^{T}\mathbf{x}\equiv\frac{1}{\sqrt{L}}\sum
_{l=1}^{L}\eta_{l}\text{,} \label{zk}%
\end{equation}
where we have defined $\eta_{k,l}=Z_{k,l}^{T}X_{l}$, with $Z_{k,l}$ and
$X_{l}$ being the $l$th entries of $\mathbf{z}_{k}$ and $\mathbf{x}$,
respectively. In particular, if $\mathcal{G}_{l}$ is the $\sigma$-field
generated by the random variables $\left\{  X_{l}\right\}  $, then notice that
$\left\{  \eta_{k,l}\right\}  $ forms a martingale difference sequence with
respect to the filtration $\left\{  \mathcal{G}_{l}\right\}  $. Indeed,
$\mathbb{E}\left[  \left.  Z_{k,l}^{T}X_{l}\right\vert \mathcal{G}%
_{l-1}\right]  =0$. Then, we notice that $\mathbb{E}\left[  \left.  \left\vert
\eta_{k,l}\right\vert ^{2}\right\vert \mathcal{G}_{l-1}\right]  =\mathbb{E}%
\left[  \left.  \left\vert Z_{k,l}\right\vert ^{2}\right\vert \mathcal{G}%
_{l-1}\right]  $ is bounded by assumption and so by Theorem
\ref{theoremMDScon} we have that (\ref{zk}) vanishes almost surely.

We now handle the last term on the RHS of equation (\ref{vanish1}) together
with the last three terms on the RHS of equation (\ref{vanish2}). From the
developments in \cite{RM11} it follows that, for a sufficiently large $p$,%
\begin{equation}
\mathbb{E}\left[  \left\vert t_{n}q_{n}\left[  \boldsymbol{\tilde{\upsilon}%
}_{N}\right]  _{n}\boldsymbol{\upsilon}_{M}^{T}\mathbf{Q}_{\left(  n\right)
}\mathbf{Z}_{j\left(  n\right)  }\mathbf{\tilde{y}}_{n}\right\vert
^{2p}\right]  \leq\frac{K_{p}\left\vert z\right\vert ^{2p}}{\left\vert
\operatorname{Im}\left\{  z\right\}  \right\vert ^{4p}}\text{,}
\label{inequalityOne}%
\end{equation}
where $K_{p}$ is a constant depending on $p$ but not on $M,N$ which may take
different values at each appearance, and%
\begin{equation}
\mathbb{E}\left[  \left\vert \chi_{j,n}\right\vert ^{2p}\right]  \leq\frac
{1}{N^{p}}\frac{K_{p}}{\left\vert \operatorname{Im}\left\{  z\right\}
\right\vert ^{2p}}\text{.} \label{inequalityTwo}%
\end{equation}
Then, using (\ref{inequalityOne}) and (\ref{inequalityTwo}), and applying
first Minkowski's and then the Cauchy-Schwarz inequalities, we get ($i,j=1,2$)%
\begin{multline*}
\mathbb{E}\left[  \left\vert \frac{1}{N}\sum_{n=1}^{N}\chi_{i,n}t_{n}%
q_{n}\left[  \boldsymbol{\tilde{\upsilon}}_{N}\right]  _{n}%
\boldsymbol{\upsilon}_{M}^{T}\mathbf{Q}_{\left(  n\right)  }\mathbf{Z}%
_{j\left(  n\right)  }\mathbf{\tilde{y}}_{n}\right\vert ^{p}\right]  \leq
\frac{1}{N^{p}}\left(  \sum_{n=1}^{N}\left(  \mathbb{E}\left[  \left\vert
\chi_{i,n}t_{n}q_{n}\left[  \boldsymbol{\tilde{\upsilon}}_{N}\right]
_{n}\boldsymbol{\upsilon}_{M}^{T}\mathbf{Q}_{\left(  n\right)  }%
\mathbf{Z}_{j\left(  n\right)  }\mathbf{\tilde{y}}_{n}\right\vert ^{p}\right]
\right)  ^{1/p}\right)  ^{p}\\
\leq\frac{1}{N^{p}}\left(  \sum_{n=1}^{N}\mathbb{E}^{1/2p}\left[  \left\vert
t_{n}q_{n}\left[  \boldsymbol{\tilde{\upsilon}}_{N}\right]  _{n}%
\boldsymbol{\upsilon}_{M}^{T}\mathbf{Q}_{\left(  n\right)  }\mathbf{Z}%
_{j\left(  n\right)  }\mathbf{\tilde{y}}_{n}\right\vert ^{2p}\right]
\mathbb{E}^{1/2p}\left[  \left\vert \chi_{i,n}\right\vert ^{2p}\right]
\right)  ^{p}\leq\frac{1}{N^{p/2}}\frac{K_{p}\left\vert z\right\vert ^{p}%
}{\left\vert \operatorname{Im}\left\{  z\right\}  \right\vert ^{3p}}\text{.}%
\end{multline*}
Furthermore, let $\mathcal{X}_{n}=\frac{1}{N}\sum_{n=1}^{N}t_{n}q_{n}%
^{2}\left[  \boldsymbol{\tilde{\upsilon}}_{N}\right]  _{n}\frac{1}%
{N}\boldsymbol{\upsilon}_{M}^{T}\mathbf{Q}_{\left(  n\right)  }\mathbf{Z}%
_{1\left(  n\right)  }\mathbf{\tilde{y}}_{n}\frac{1}{N}\mathbf{\tilde{y}}%
_{n}^{T}\mathbf{Q}_{\left(  n\right)  }\mathbf{Z}_{2\left(  n\right)
}\mathbf{\tilde{y}}_{n}\frac{1}{N}\mathbf{\tilde{y}}_{n}^{T}\mathbf{Q}%
_{\left(  n\right)  }\mathbf{Z}_{1\left(  n\right)  }\mathbf{\tilde{y}}_{n}$
and, by using Jensen's inequality along with (\ref{inequalityOne}), observe
that%
\[
\mathbb{E}\left[  \left\vert \frac{1}{N}\sum_{n=1}^{N}\mathcal{X}%
_{n}\right\vert ^{p}\right]  \leq\frac{1}{N}\sum_{n=1}^{N}\mathbb{E}\left[
\left\vert \mathcal{X}_{n}\right\vert ^{p}\right]  \leq\max_{1\leq n\leq
N}\mathbb{E}\left[  \left\vert \mathcal{X}_{n}\right\vert ^{p}\right]
\leq\frac{1}{N^{1+\epsilon}}\frac{K_{p}\left\vert z\right\vert ^{q}%
}{\left\vert \operatorname{Im}\left\{  z\right\}  \right\vert ^{r}%
}\text{,\quad}\epsilon>0\text{,}%
\]
for sufficiently large $p$ and well-chosen $q$ and $r$, on $p$ but not on
$M,N$. Then, the almost sure convergence to zero of the four terms for each
$z\in\mathbb{C}^{\mathbb{+}}$ follows then by Borel-Cantelli's lemma. Finally,
convergence of the real nonnegative axis follows by an argument based on
Montel's normal family theorem (see, e.g., Section 4 in \cite{RM11}).

\subsection{The term $\hat{\xi}_{M}^{\left(  4\right)  }$ and $\hat{\xi}%
_{M}^{\left(  6\right)  }$}

Observe that%
\begin{align*}
\boldsymbol{\upsilon}_{M}^{T}\mathbf{\hat{\Sigma}}_{M}^{-1}\mathbf{R}%
_{M}\mathbf{\hat{\Sigma}}_{M}^{-1}\boldsymbol{\upsilon}_{M}  &
=\boldsymbol{\upsilon}_{M}^{T}\mathbf{R}_{M}^{-1/2}\left(  \mathbf{X}%
_{N}\mathbf{T}_{N}\mathbf{X}_{N}^{T}+\alpha_{M}\mathbf{R}_{M}^{-1}\right)
^{-2}\mathbf{R}_{M}^{-1/2}\boldsymbol{\upsilon}_{M}\\
&  =\left.  \frac{\partial}{\partial z}\left\{  \boldsymbol{\upsilon}_{M}%
^{T}\mathbf{R}_{M}^{-1/2}\left(  \alpha_{M}\mathbf{R}_{M}^{-1}+\mathbf{X}%
_{N}\mathbf{T}_{N}\mathbf{X}_{N}^{T}-z\mathbf{I}_{M}\right)  ^{-1}%
\mathbf{R}_{M}^{-1/2}\boldsymbol{\upsilon}_{M}\right\}  \right\vert
_{z=0}\text{.}%
\end{align*}
Furthermore, using (\ref{theoGeneral_c}) in Theorem \ref{theoGeneral} we get%
\[
\boldsymbol{\upsilon}_{M}^{T}\mathbf{R}_{M}^{-1/2}\left(  \alpha_{M}%
\mathbf{R}_{M}^{-1}+\mathbf{X}_{N}\mathbf{T}_{N}\mathbf{X}_{N}^{T}%
-z\mathbf{I}_{M}\right)  ^{-1}\mathbf{R}_{M}^{-1/2}\boldsymbol{\upsilon}%
_{M}\asymp\boldsymbol{\upsilon}_{M}^{T}\mathbf{R}_{M}^{-1/2}\left(  \alpha
_{M}\mathbf{R}_{M}^{-1}+\left(  x_{M}^{\left(  4\right)  }-z\right)
\mathbf{I}_{M}\right)  ^{-1}\mathbf{R}_{M}^{-1/2}\boldsymbol{\upsilon}%
_{M}\text{,}%
\]
where, for each $z$ outside the real positive axis, $\left\{  e_{M}^{\left(
4\right)  }\left(  z\right)  ,x_{M}^{\left(  4\right)  }\left(  z\right)
\right\}  $ is the unique solution to the system:%
\begin{align*}
e_{M}^{\left(  4\right)  }\left(  z\right)   &  =\frac{1}{N}\operatorname*{tr}%
\left[  \left(  \alpha_{M}\mathbf{R}_{M}^{-1}+\left(  x_{M}^{\left(  4\right)
}\left(  z\right)  -z\right)  \mathbf{I}_{M}\right)  ^{-1}\right] \\
x_{M}^{\left(  4\right)  }\left(  z\right)   &  =\frac{1}{N}\operatorname*{tr}%
\left[  \mathbf{T}_{N}\left(  \mathbf{I}_{N}+e_{M}^{\left(  4\right)  }\left(
z\right)  \mathbf{T}_{N}\right)  ^{-1}\right]  \text{.}%
\end{align*}
Then, using%
\[
x_{M}^{\left(  4\right)  \prime}\left(  0\right)  =-\left(  1-x_{M}^{\left(
4\right)  \prime}\left(  0\right)  \right)  \frac{1}{N}\operatorname*{tr}%
\left[  \mathbf{R}_{M}^{2}\left(  x_{M}^{\left(  4\right)  }\left(  0\right)
\mathbf{R}_{M}+\alpha_{M}\mathbf{I}_{M}\right)  ^{-2}\right]  \frac{1}%
{N}\operatorname*{tr}\left[  \mathbf{T}_{N}^{2}\left(  \mathbf{I}_{N}%
+e_{M}^{\left(  4\right)  }\left(  0\right)  \mathbf{T}_{N}\right)
^{-2}\right]  \text{,}%
\]
we finally get%
\begin{multline*}
\left.  \frac{\partial}{\partial z}\left\{  \boldsymbol{\upsilon}_{M}%
^{T}\mathbf{R}_{M}^{-1/2}\left(  \alpha_{M}\mathbf{R}_{M}^{-1}+\left(
x_{M}^{\left(  4\right)  }-z\right)  \mathbf{I}_{M}\right)  ^{-1}%
\mathbf{R}_{M}^{-1/2}\boldsymbol{\upsilon}_{M}\right\}  \right\vert _{z=0}\\
=\left(  1-x_{M}^{\left(  4\right)  \prime}\left(  0\right)  \right)
\boldsymbol{\upsilon}_{M}^{T}\mathbf{R}_{M}^{-1/2}\left(  x_{M}^{\left(
4\right)  }\left(  0\right)  \mathbf{R}_{M}+\alpha_{M}\mathbf{I}_{M}\right)
^{-2}\mathbf{R}_{M}^{-1/2}\boldsymbol{\upsilon}_{M}\text{,}%
\end{multline*}
where%
\begin{align*}
1-x_{M}^{\left(  4\right)  \prime}\left(  0\right)   &  =1+e_{M}^{\left(
4\right)  \prime}\left(  0\right)  \frac{1}{N}\operatorname*{tr}\left[
\mathbf{T}_{N}^{2}\left(  \mathbf{I}_{N}+e_{M}^{\left(  4\right)  }\left(
0\right)  \mathbf{T}_{N}\right)  ^{-2}\right] \\
&  =1+\left(  1-x_{M}^{\left(  4\right)  \prime}\left(  0\right)  \right)
\frac{1}{N}\operatorname*{tr}\left[  \mathbf{R}_{M}^{2}\left(  x_{M}^{\left(
4\right)  }\left(  0\right)  \mathbf{R}_{M}+\alpha_{M}\mathbf{I}_{M}\right)
^{-2}\right]  \frac{1}{N}\operatorname*{tr}\left[  \mathbf{T}_{N}^{2}\left(
\mathbf{I}_{N}+e_{M}^{\left(  4\right)  }\left(  0\right)  \mathbf{T}%
_{N}\right)  ^{-2}\right] \\
&  =\frac{1}{1-\frac{1}{N}\operatorname*{tr}\left[  \mathbf{T}_{N}^{2}\left(
\mathbf{I}_{N}+e_{M}^{\left(  4\right)  }\left(  0\right)  \mathbf{T}%
_{N}\right)  ^{-2}\right]  \frac{1}{N}\operatorname*{tr}\left[  \mathbf{R}%
_{M}^{2}\left(  x_{M}^{\left(  4\right)  }\left(  0\right)  \mathbf{R}%
_{M}+\alpha_{M}\mathbf{I}_{M}\right)  ^{-2}\right]  }\text{.}%
\end{align*}

Let us now deal with $\hat{\xi}_{M}^{\left(  6\right)  }$. We recall that%
\[
\hat{\xi}_{M}^{\left(  6\right)  }=\boldsymbol{\hat{\upsilon}}_{M}%
^{T}\mathbf{\hat{\Sigma}}_{M}^{-1}\mathbf{R}_{M}\mathbf{\hat{\Sigma}}_{M}%
^{-1}\boldsymbol{\hat{\upsilon}}_{M}=\frac{1}{N}\boldsymbol{\upsilon}_{N}%
^{T}\mathbf{T}_{N}^{1/2}\mathbf{X}^{T}\left(  \frac{1}{N}\mathbf{X}%
_{N}\mathbf{T}_{N}\mathbf{X}_{N}^{T}+\alpha_{M}\mathbf{R}_{M}^{-1}\right)
^{-2}\mathbf{X}_{N}\mathbf{T}_{N}^{1/2}\boldsymbol{\upsilon}_{N}\text{.}%
\]
Let $\mathbf{A}_{M}=\mathbf{A}_{M}\left(  t\right)  =\alpha_{M}\mathbf{R}%
_{M}^{-1}+t\mathbf{I}_{M}$, with $t>0$ being a real positive scalar, and
observe that%
\begin{multline*}
\boldsymbol{\upsilon}_{N}^{T}\mathbf{T}_{N}^{1/2}\mathbf{X}^{T}\left(
\frac{1}{N}\mathbf{X}_{N}\mathbf{T}_{N}\mathbf{X}_{N}^{T}+\alpha_{M}%
\mathbf{R}_{M}^{-1}\right)  ^{-2}\mathbf{X}_{N}\mathbf{T}_{N}^{1/2}%
\boldsymbol{\upsilon}_{N}\\
=-\left.  \frac{\partial}{\partial t}\left\{  \boldsymbol{\upsilon}_{N}%
^{T}\mathbf{T}_{N}^{1/2}\mathbf{X}^{T}\left(  \frac{1}{N}\mathbf{X}%
_{N}\mathbf{T}_{N}\mathbf{X}_{N}^{T}+\mathbf{A}_{M}\left(  t\right)  \right)
^{-1}\mathbf{X}_{N}\mathbf{T}_{N}^{1/2}\boldsymbol{\upsilon}_{N}\right\}
\right\vert _{t=0}\text{.}%
\end{multline*}
Furthermore, using the matrix inversion lemma in (\ref{MIL}), we write%
\[
\mathbf{T}_{N}^{1/2}\mathbf{X}^{T}\left(  \frac{1}{N}\mathbf{X}_{N}%
\mathbf{T}_{N}\mathbf{X}_{N}^{T}+\mathbf{A}_{M}\left(  t\right)  \right)
^{-1}\mathbf{X}_{N}\mathbf{T}_{N}^{1/2}=\mathbf{I}_{N}-\left(  \mathbf{I}%
_{N}+\mathbf{T}_{N}^{1/2}\mathbf{X}^{T}\left(  \alpha_{M}\mathbf{R}_{M}%
^{-1}+t\mathbf{I}_{M}\right)  ^{-1}\mathbf{X}_{N}\mathbf{T}_{N}^{1/2}\right)
^{-1}\text{,}%
\]
and so we have%
\begin{multline*}
\boldsymbol{\upsilon}_{N}^{T}\mathbf{T}_{N}^{1/2}\mathbf{X}^{T}\left(
\frac{1}{N}\mathbf{X}_{N}\mathbf{T}_{N}\mathbf{X}_{N}^{T}+\alpha
_{M}\mathbf{\tilde{\Sigma}}\right)  ^{-2}\mathbf{X}_{N}\mathbf{T}_{N}%
^{1/2}\boldsymbol{\upsilon}_{N}\\
=\left.  \frac{\partial}{\partial t}\left\{  \boldsymbol{\upsilon}_{N}%
^{T}\left(  \mathbf{I}_{N}+\mathbf{T}_{N}^{1/2}\mathbf{X}^{T}\left(
\alpha_{M}\mathbf{R}_{M}^{-1}+t\mathbf{I}_{M}\right)  ^{-1}\mathbf{X}%
_{N}\mathbf{T}_{N}^{1/2}\right)  ^{-1}\boldsymbol{\upsilon}_{N}\right\}
\right\vert _{t=0}\text{.}%
\end{multline*}
Now, using Theorem \ref{theoGeneral} we get%
\[
\boldsymbol{\upsilon}_{N}^{T}\left(  \frac{1}{N}\mathbf{T}_{N}^{1/2}%
\mathbf{X}^{T}\left(  \alpha_{M}\mathbf{R}_{M}^{-1}+t\mathbf{I}_{M}\right)
^{-1}\mathbf{X}_{N}\mathbf{T}_{N}^{1/2}+\mathbf{I}_{N}\right)  ^{-1}%
\boldsymbol{\upsilon}_{N}\asymp\boldsymbol{\upsilon}_{N}^{T}\left(
x_{M}^{\left(  6\right)  }\left(  -1\right)  \mathbf{T}_{N}+\mathbf{I}%
_{N}\right)  ^{-1}\boldsymbol{\upsilon}_{N}\text{,}%
\]
where $\left\{  x_{M}^{\left(  6\right)  }\left(  -1\right)  =x_{M}^{\left(
6\right)  }\left(  t\right)  ,e_{M}^{\left(  6\right)  }\left(  -1\right)
=e_{M}^{\left(  6\right)  }\left(  t\right)  \right\}  $ is the solution to
the following system of equations:%
\begin{align*}
e_{M}^{\left(  6\right)  }\left(  t\right)   &  =\frac{1}{N}\operatorname*{tr}%
\left[  \mathbf{T}_{N}\left(  x_{M}^{\left(  6\right)  }\left(  t\right)
\mathbf{T}_{N}+\mathbf{I}_{M}\right)  ^{-1}\right] \\
x_{M}^{\left(  6\right)  }\left(  t\right)   &  =\frac{1}{N}\operatorname*{tr}%
\left[  \left(  \alpha_{M}\mathbf{R}_{M}^{-1}+\left(  t+e_{M}^{\left(
6\right)  }\left(  t\right)  \right)  \mathbf{I}_{M}\right)  ^{-1}\right]
\text{.}%
\end{align*}
Finally, notice that%
\[
\left.  \frac{\partial}{\partial t}\left\{  \boldsymbol{\upsilon}_{N}%
^{T}\left(  x_{M}^{\left(  6\right)  }\left(  t\right)  \mathbf{T}%
_{N}+\mathbf{I}_{N}\right)  ^{-1}\boldsymbol{\upsilon}_{N}\right\}
\right\vert _{t=0}=-x_{M}^{\left(  6\right)  \prime}\left(  0\right)
\boldsymbol{\upsilon}_{N}^{T}\left(  x_{M}^{\left(  6\right)  }\left(
0\right)  \mathbf{T}_{N}+\mathbf{I}_{N}\right)  ^{-2}\boldsymbol{\upsilon}%
_{N}\text{,}%
\]
where $x_{M}^{\left(  6\right)  }\left(  0\right)  =\frac{1}{N}%
\operatorname*{tr}\left[  \left(  \alpha_{M}\mathbf{R}_{M}^{-1}+e_{M}^{\left(
6\right)  }\left(  0\right)  \mathbf{I}_{M}\right)  ^{-1}\right]  $, and%
\begin{align*}
-x_{M}^{\left(  6\right)  \prime}\left(  0\right)   &  =\left(  1+e_{M}%
^{\left(  6\right)  \prime}\left(  0\right)  \right)  \frac{1}{N}%
\operatorname*{tr}\left[  \mathbf{\Sigma}^{2}\left(  e_{M}^{\left(  6\right)
}\left(  0\right)  \mathbf{R}_{M}+\alpha_{M}\mathbf{I}_{M}\right)
^{-2}\right] \\
&  =\left(  1-x_{M}^{\left(  6\right)  \prime}\left(  0\right)  \frac{1}%
{N}\operatorname*{tr}\left[  \mathbf{T}_{N}\left(  x_{M}^{\left(  6\right)
}\left(  0\right)  \mathbf{T}_{N}+\mathbf{I}_{M}\right)  ^{-2}\right]
\right)  \frac{1}{N}\operatorname*{tr}\left[  \mathbf{R}_{M}^{2}\left(
e_{M}^{\left(  6\right)  }\left(  0\right)  \mathbf{R}_{M}+\alpha
_{M}\mathbf{I}_{M}\right)  ^{-2}\right] \\
&  =\frac{\frac{1}{N}\operatorname*{tr}\left[  \mathbf{R}_{M}^{2}\left(
e_{M}^{\left(  6\right)  }\left(  0\right)  \mathbf{R}_{M}+\alpha
_{M}\mathbf{I}_{M}\right)  ^{-2}\right]  }{1-\frac{1}{N}\operatorname*{tr}%
\left[  \mathbf{R}_{M}^{2}\left(  e_{M}^{\left(  6\right)  }\left(  0\right)
\mathbf{R}_{M}+\alpha_{M}\mathbf{I}_{M}\right)  ^{-2}\right]  \frac{1}%
{N}\operatorname*{tr}\left[  \mathbf{T}_{N}\left(  x_{M}^{\left(  6\right)
}\left(  0\right)  \mathbf{T}_{N}+\mathbf{I}_{M}\right)  ^{-2}\right]
}\text{.}%
\end{align*}

\section{Proof of Theorem \ref{theoGCE}%
\label{appTheoGCE}%
}

We first show (\ref{theoGCEa}), i.e.,%
\begin{multline*}
\frac{1}{\frac{1}{N}\operatorname*{tr}\left[  \mathbf{T}_{N}\left(
\mathbf{I}_{N}+\hat{\delta}_{M}\mathbf{T}_{N}\right)  ^{-2}\right]
}\boldsymbol{\upsilon}_{M}^{T}\frac{1}{N}\mathbf{\tilde{Y}}_{N}\mathbf{\tilde
{Y}}_{N}^{T}\left(  \frac{1}{N}\mathbf{\tilde{Y}}_{N}\mathbf{\tilde{Y}}%
_{N}^{T}+\alpha_{M}\mathbf{I}_{M}\right)  ^{-2}\boldsymbol{\upsilon}_{M}\\
\asymp\frac{1}{1-\gamma_{M}\tilde{\gamma}_{M}}\boldsymbol{\upsilon}_{M}%
^{T}\mathbf{R}_{M}^{1/2}\left(  \tilde{\delta}_{M}\mathbf{R}_{M}+\alpha
_{M}\mathbf{I}_{M}\right)  ^{-2}\mathbf{R}_{M}^{1/2}\boldsymbol{\upsilon}%
_{M}\text{,}%
\end{multline*}
Using (\ref{propAux2a}) in Proposition \ref{propAux} with $\Theta
_{M}=\boldsymbol{\upsilon}_{M}\boldsymbol{\upsilon}_{M}^{T}$ and
$z=-\alpha_{M}$, we have that (notice that $\left.  x_{M}-zx_{M}^{\prime
}\right\vert _{z=-\alpha_{M}}=\tilde{\delta}_{M}+\alpha_{M}\tilde{\zeta}_{M}$)%
\[
\left.  \frac{1}{N}\boldsymbol{\upsilon}_{M}^{T}\mathbf{\tilde{Y}}%
_{N}\mathbf{\tilde{Y}}_{N}^{T}\left(  \frac{1}{N}\mathbf{\tilde{Y}}%
_{N}\mathbf{\tilde{Y}}_{N}^{T}-z\mathbf{I}_{M}\right)  ^{-2}%
\boldsymbol{\upsilon}_{M}\right\vert _{z=-\alpha_{M}}\asymp\left(
\tilde{\delta}_{M}+\alpha_{M}\tilde{\zeta}_{M}\right)  \boldsymbol{\upsilon
}_{M}^{T}\mathbf{R}_{M}\left(  \tilde{\delta}_{M}\mathbf{R}_{M}+\alpha
_{M}\mathbf{I}_{M}\right)  ^{-2}\boldsymbol{\upsilon}_{M}\text{,}%
\]
and the proof follows by (\ref{derDeltatilde}) in Lemma \ref{lemmaAuxRel}.

Let us now handle (\ref{theoGCEb}). We want to prove that, in effect,%
\begin{multline*}
\frac{1}{\frac{1}{N}\operatorname*{tr}\left[  \mathbf{T}_{N}\left(
\mathbf{I}_{N}+\hat{\delta}_{M}\mathbf{T}_{N}\right)  ^{-2}\right]  }\left(
\boldsymbol{\upsilon}_{M}^{T}\left(  \frac{1}{N}\mathbf{\tilde{Y}}%
_{N}\mathbf{\tilde{Y}}_{N}^{T}\right)  ^{2}\left(  \frac{1}{N}\mathbf{\tilde
{Y}}_{N}\mathbf{\tilde{Y}}_{N}^{T}+\alpha_{M}\mathbf{I}_{M}\right)
^{-2}\boldsymbol{\upsilon}_{M}-\hat{\delta}_{M}^{2}\boldsymbol{\upsilon}%
_{M}^{T}\mathbf{T}^{2}\left(  \mathbf{I}_{N}+\hat{\delta}_{M}\mathbf{T}%
\right)  ^{-2}\boldsymbol{\upsilon}_{M}\right) \\
\asymp\frac{\gamma_{M}}{1-\gamma_{M}\tilde{\gamma}_{M}}\boldsymbol{\upsilon
}_{N}^{T}\left(  \delta_{M}\mathbf{T}_{N}+\mathbf{I}_{N}\right)
^{-2}\boldsymbol{\upsilon}_{N}\text{.}%
\end{multline*}
First, observe that%
\begin{multline*}
\boldsymbol{\upsilon}_{N}^{T}\left(  \frac{1}{N}\mathbf{\tilde{Y}}_{N}%
^{T}\mathbf{\tilde{Y}}_{N}\right)  ^{2}\left(  \frac{1}{N}\mathbf{\tilde{Y}%
}_{N}^{T}\mathbf{\tilde{Y}}_{N}+\alpha_{M}\mathbf{I}_{N}\right)
^{-2}\boldsymbol{\upsilon}_{N}\\
=\boldsymbol{\upsilon}_{N}^{T}\frac{1}{N}\mathbf{\tilde{Y}}_{N}^{T}%
\mathbf{\tilde{Y}}_{N}\left(  \frac{1}{N}\mathbf{\tilde{Y}}_{N}^{T}%
\mathbf{\tilde{Y}}_{N}+\alpha_{M}\mathbf{I}_{N}\right)  ^{-1}%
\boldsymbol{\upsilon}_{N}-\alpha_{M}\frac{1}{N}\boldsymbol{\upsilon}_{N}%
^{T}\mathbf{\tilde{Y}}_{N}^{T}\mathbf{\tilde{Y}}_{N}\left(  \frac{1}%
{N}\mathbf{\tilde{Y}}_{N}^{T}\mathbf{\tilde{Y}}_{N}+\alpha_{M}\mathbf{I}%
_{N}\right)  ^{-2}\boldsymbol{\upsilon}_{N}\text{.}%
\end{multline*}
Moreover, asymptotic deterministic equivalents of the two terms on the RHS can
be found by (\ref{propAux1b}) in Proposition \ref{propAux} with $\Theta
_{N}=\boldsymbol{\upsilon}_{N}\boldsymbol{\upsilon}_{N}^{T}$ and
$z=-\alpha_{M}$ as%
\[
\boldsymbol{\upsilon}_{N}^{T}\frac{1}{N}\mathbf{\tilde{Y}}_{N}^{T}%
\mathbf{\tilde{Y}}_{N}\left(  \frac{1}{N}\mathbf{\tilde{Y}}_{N}^{T}%
\mathbf{\tilde{Y}}_{N}+\alpha_{M}\mathbf{I}_{N}\right)  ^{-1}%
\boldsymbol{\upsilon}_{N}\asymp\delta_{M}\boldsymbol{\upsilon}_{N}%
\mathbf{T}\left(  \mathbf{I}_{N}+\delta_{M}\mathbf{T}\right)  ^{-1}%
\boldsymbol{\upsilon}_{N}%
\]
and%
\[
\frac{1}{N}\boldsymbol{\upsilon}_{N}^{T}\mathbf{\tilde{Y}}_{N}^{T}%
\mathbf{\tilde{Y}}_{N}\left(  \frac{1}{N}\mathbf{\tilde{Y}}_{N}^{T}%
\mathbf{\tilde{Y}}_{N}+\alpha_{M}\mathbf{I}_{N}\right)  ^{-2}%
\boldsymbol{\upsilon}_{N}\asymp\zeta_{M}\boldsymbol{\upsilon}_{N}%
^{T}\mathbf{T}\left(  \mathbf{I}_{N}+\delta_{M}\mathbf{T}\right)
^{-2}\boldsymbol{\upsilon}_{N}\text{.}%
\]
Then, by rearranging terms we can write%
\[
\boldsymbol{\upsilon}_{N}^{T}\left(  \mathbf{\tilde{Y}}_{N}^{T}\mathbf{\tilde
{Y}}_{N}\right)  ^{2}\left(  \mathbf{\tilde{Y}}_{N}^{T}\mathbf{\tilde{Y}}%
_{N}+\alpha_{M}\mathbf{I}_{N}\right)  ^{-2}\boldsymbol{\upsilon}_{N}%
\asymp\left(  \delta_{M}-\alpha_{M}\zeta_{M}\right)  \boldsymbol{\upsilon}%
_{N}^{T}\mathbf{T}\left(  \mathbf{I}_{N}+\delta_{M}\mathbf{T}\right)
^{-2}\boldsymbol{\upsilon}_{N}+\delta_{M}^{2}\boldsymbol{\upsilon}_{N}%
^{T}\mathbf{T}^{2}\left(  \mathbf{I}_{N}+\delta_{M}\mathbf{T}\right)
^{-2}\boldsymbol{\upsilon}_{N}\text{,}%
\]
and the result follows finally after straightforward algebraic manipulations
by (\ref{derDelta}) in Lemma \ref{lemmaAuxRel}.

\bibliographystyle{IEEEtran}
\bibliography{frubio2}

\end{document}